\newtheorem{assumption}{Assumption}
\newtheorem{assum}{Assumption}
\newtheorem{theorem}{Theorem}
\newtheorem{lemma}{Lemma}
\newtheorem{definition}{Definition}
\newtheorem{remark}{Remark}
\newtheorem{cor}{Corollary}
\newtheorem{lem}{Lemma}
\newtheorem{example}{Example}
\title{Traffic Control in a Mixed Autonomy Scenario at Urban Intersections: An Optimization-based Framework}
\begin{document}
\author{Arnob Ghosh and Thomas Parisini
\thanks{This work has been partially supported by the European Union's Horizon 2020 research and innovation programme under grant agreement No 739551 (KIOS CoE) and by the Italian Ministry for Research in the framework of the 2017 Program for Research Projects of National Interest (PRIN), Grant no. 2017YKXYXJ.}
\thanks{A. Ghosh is with the Dept. of Electrical and Electronic Engineering at Imperial College London, UK {\tt\small (arnob.ghosh@imperial.ac.uk)}}
\thanks{T. Parisini is with the Dept. of Electrical and Electronic Engineering at Imperial College London, UK, 
with the KIOS Research and Innovation Centre of Excellence, University of Cyprus,
and also with the Dept. of Engineering and Architecture at the University of Trieste, Italy {\tt\small (t.parisini@gmail.com)}}
}
\maketitle

\begin{abstract}
We consider an intersection zone where autonomous vehicles (AVs) and human-driven vehicles (HDVs) can be present. As a new vehicle arrives, the traffic controller needs to decide and impose an optimal sequence of the vehicles which will exit the intersection zone. The traffic controller can send information regarding the time at which an AV can cross the intersection; however, the traffic controller can not communicate with the HDVs, rather the HDVs can only be controlled using the traffic lights. We formulate the problem as an integer constrained non-linear optimization problem where the traffic-intersection controller only communicates with a subset of the AVs. Since the number of possible combinations increases exponentially with the number of vehicles in the system, we relax the original problem and proposes an algorithm which gives the optimal solution of the relaxed problem and yet only scales linearly with the number of vehicles in the system. Numerical evaluation shows that our algorithm outperforms the First-In-First-Out (FIFO) algorithm.  
\end{abstract}

\section{Introduction}
\subsection{Motivation}
The autonomous vehicle technology has the potential to transform the transportation system in a disruptive manner. For example, the traffic intersection controller can communicate with the autonomous vehicles (AVs) \footnote{Here, we assume that the AVs are connected.}. \begin{color}{black} The traffic controller can obtain the optimal trajectory for each vehicle by solving an optimization problem  and subsequently, can communicate the optimal decision to the AVs. The AVs would then follow the solution provided by the traffic-controller. However, in initial years, the human driven vehicles (HDVs) and AVs will co-exist. While the intersection controller can communicate with the AVs, it can not communicate with the HDVs. Thus, such a mechanism will not work when both the AVs and the HDVs co-exist.   Thus, the traffic-intersection controller needs to decide the optimal schedule in which they will enter the intersection and control the traffic-light accordingly. The HDVs behave in a non-linear, complicated, and unpredictable manner. Thus, computing an optimal schedule is inherently challenging in a mixed autonomy. We seek to contribute in this space.\end{color}

\subsection{Related Literature}
Designing of mechanism for AVs to cross the intersection has been studied. In~\cite{dresner},  a reservation-based approach for intersection management is considered. \cite{li,yan,zhu} have focused on coordinating with vehicles to reduce the traffic delays. \cite{karaman} has proposed a polling mechanism to determine a sequence of times assigned to each vehicle. \cite{gilbert,hooker} considered minimizing the energy consumption for AVs. \cite{LEVIN2017528} developed a conflict point formulation to control the AVs. \cite{7963717,8690681} developed a MILP formulation to find optimal schedule for the AVs to cross a traffic intersection. \cite{yu2018integrated} considered an optimization based framework when there are only AVs.  A detailed discussion of the overall research in
this area can be found in \cite{rios-torres}.

\begin{color}{black}
Prof Cassandras' group has led several research efforts in developing optimal control mechanism to control the vehicles' trajectories at the intersections when there are only connected AVs. For example, \cite{cassandras1} has considered a decentralized optimal control framework for coordinating AVs crossing an urban intersection without explicit traffic signaling. In \cite{cassandras2}, a First-In-First-Out (FIFO) queuing structure has been proposed which specifies the crossing time for each AV. The authors then solved an optimal energy minimization problem to control the velocity and acceleration of each AV. In \cite{cassandras3}, the authors proposed a re-sequencing process by relaxing the FIFO order in order to increase the throughput. The idea is that if certain lanes in an intersection have less traffic, vehicles which arrive later to the intersection can move without waiting for vehicles in other lanes to cross the intersection if the safety constraint permits. \cite{zohdy} presented an approach based on Cooperative Adaptive Cruise Control for minimizing intersection delay and maximizing the throughput. Unsignalized traffic-intersection has also been also considered in \cite{zhang2020virtual,viana2019cooperative}.

Unlike the above papers, we consider a mixed traffic scenario consisting of HDVs and AVs.   In the near future, the AV and the HDV will co-exist. Hence, a traffic controller at the urban intersection should consider both the AVs and HDVs while taking its decision. The presence of the HDVs along with the AVs make the decision process more challenging.  First, the traffic-controller can not communicate with the HDVs. Second, one still requires  signalized traffic intersections to control the movement of the HDVs in order to avoid collision.  \cite{Zhang_2018} studied the impact of the HDVs on the energy consumption in an urban intersection. \cite{li2020,zheng,wu,guler2014using} studied the optimal arrangement of HDV and AV in a ring which would stabilize the system. \cite{niroumand2020joint} has considered introducing a white-phase inside the traffic-light phase to allow platooning. However, the above paper considered connected HDVs where HDVs can follow command provided by the traffic-intersection controller. Instead, we propose a mechanism where the HDVs need not be connected. Thus, our mechanism can be applied to a diverse set-up. Further, the above paper considered a cycle-based traffic-light system where each cycle is of fixed duration, instead, in our setting the traffic-light  is adapted based on the schedule of the vehicle, hence, the duration can be of variable length. 

The closest to our work is the recent optimization-based framework considered in \cite{hajbabaie}. The authors proposed an optimization-based framework to decide jointly the optimal trajectory of {\em all} the connected AVs (CAVs), and whether the next phase will be red or green at each lane. In contrast, our approach seeks to decide the optimal schedule of vehicles. Hence, instead of deciding whether the phases will be green or red, in our approach, the traffic-lights can be adjusted to much granular level as the traffic-lights can be changed based on the schedule of the vehicles which will enter the intersection.  Further, our approach does not seek to compute and inform the trajectory of all the CAVs, rather the traffic-controller only  provides information to those AVs which will face red-light when they become the lead vehicles.  Our proposed algorithm only informs the time at which an AV can enter. Once a schedule is computed, the traffic-intersection controller adjusts the traffic-light duration based on the schedule in our approach whereas in\cite{hajbabaie}, it needs to compute the trajectory every time instance to communicate with the AVs. Hence, it greatly reduces the communication cost compared to \cite{hajbabaie}. Our method only needs to recompute a schedule when a new vehicle enters. Thus, the computation cost is also smaller in our approach compared to \cite{hajbabaie}.  Our method is a combination of the decentralized and centralized approach--- where the traffic-controller only computes the optimal schedule decision centrally, the dynamics decisions are taken by the vehicles in a decentralized manner unlike the complete centralized approach considered in \cite{hajbabaie}. Our objective considers the fuel cost, and the velocity of the vehicles in addition to the delays of the individual vehicles unlike in \cite{hajbabaie}. \end{color}


\subsection{A Glimpse on the Optimization Problem}
We consider a traffic intersection zone at an urban setting. The intersection zone consists of a control zone and the merging zone, which together is called a {\em system} (Section~\ref{sec:system_model}).  \begin{color}{black}As a new vehicle arrives at the control zone, the traffic controller needs to determine the order at which the vehicles including the newly arrived vehicle would enter the intersection.  It is an online decision and the sequence is updated as soon as a new vehicle arrives. The traffic-light state at lanes is computed from the scheduling decision as the vehicles from the conflicting lanes can not enter the intersection simultaneously (Section~\ref{sec:traffic_light}). The dynamics of the vehicles depend on the traffic-light state. The traffic intersection controller informs  the times to an AV at which it can enter the intersection if it has entered the lane at a lead vehicle (i.e, the front vehicle at the control zone of that lane). If an AV is not the lead vehicle and can not enter the intersection immediately following its preceding vehicle, i.e., it faces a red-light when it becomes the lead-vehicle, the traffic controller also informs the AV at what time it can enter the intersection. The AV then adjusts its dynamics and enters the intersection at the specified time with the highest possible velocity (Section~\ref{sec:avlead}). When an AV  enters the intersection immediately after its preceding vehicle in the computed sequence, its dynamics is driven by the vehicles preceding the AVs (Section~\ref{sec:avfollow}). 

We assume that if a HDV is a lead vehicle, it  would decelerate at a uniform rate after seeing the red light (Section~\ref{sec:hdvlead}). When the green-light is switched on, if the HDV is the lead vehicle it would accelerate at a uniform rate till the velocity reaches the maximum value. The traffic controller can not send communicate with  the HDVs specifying the times at which they can enter the intersection unlike the AVs. Similar to the AV, we assume that when the HDVs follow other vehicles, their dynamics are governed by the preceding vehicles (Section~\ref{sec:avfollow}). The time at which all the vehicles enter the intersection and exit the intersection are completely specified once the lead vehicle's dynamics is specified at a lane using a recursive equation and the scheduling decision of the traffic-intersection controller. \end{color} 

We formulate the problem of determining the optimal sequence of vehicles crossing the intersection as an optimization problem (Section~\ref{sec:opti}). We consider the following objectives for the traffic controller: (a) the traffic controller tries to minimize the total energy cost. The energy cost is defined as the $L_2$-norm of the acceleration/retardation across the time the vehicles spend  in the system. (b) The traffic controller tries to minimize the total time each vehicle takes to cross the intersection. (c) The intersection controller seeks to maximize the velocity at which the vehicle can enter the intersection. If more vehicles enter the intersection at the speed limit it would increase the throughput. (d) The traffic controller seeks to minimize the overall time taken for all the vehicles to cross the intersection.\footnote{\begin{color}{black}Objectives (b) and (d) are not the same, (b) seeks to minimize the time each vehicle spends in the system; on the other hand, (d) seeks to minimize the total time it takes for all the vehicles to exit the intersection. \end{color}} In the optimization problems velocities and accelerations of the vehicles are constrained. The decision variable is the schedule of the vehicles present at the system, the scheduling decision must satisfy the safety constraint where the vehicles from the conflicting lanes can not present at the merging zone at a time (Section~\ref{sec:safety}). 

The optimization problem is computationally hard to solve.  The objective is non-linear and the equality constraints are non-linear since the time at which vehicles enter the intersection are non-linear functions of the dynamics. Further, the optimization problem has integer valued variables which specifies the order at which the vehicles would enter the intersection from the conflicting lanes. The number of possible sequences increase exponentially with the number of vehicles in the system (Section~\ref{sec:relax}). Thus, we consider a relaxed version of the problem where the traffic controller would only determine among the sequences involving the newly arrived vehicle and the vehicles in the conflicting lanes of the newly entered vehicle (Section~\ref{sec:relax}). The relative order of other vehicles in which they will enter the intersection is kept unchanged.  We show that such a relaxation drastically reduces the number of possible combinations  as the complexity only increases linearly with the number of vehicles in the worst case. We propose an algorithm which solves the relaxed optimization problem (Section~\ref{sec:algo}). 

\begin{color}{black}
We, empirically show that our algorithm reduces the total time taken by all the vehicles by at least 50\% compared to the FIFO sequence proposed in the existing literature (Section~\ref{sec:numerical}). In the FIFO, whichever vehicle enters the control zone first also enters the intersection first. We show that our algorithm significantly improves the performance when the penetration of the AV is small compared to the FIFO. We show that our algorithm increases the average velocity at which vehicles enter the intersection by at least 30\%, minimizes the time taken by each vehicle to cross the intersection by at least 40\%, and minimizes the magnitude of acceleration/deceleration compared to the FIFO. Unsurprisingly, we show that as the AV penetration rate increases, our algorithm performs better. We demonstrate that our proposed mechanism can be extended to the scenario where the HDVs deviate from the nominal values of the model assumed in the paper. We  empirically investigate the impact of the order of AV and HDV entering the system (i.e., the control zone). Empirical evidence suggest that the arrangement of vehicles where AVs follow HDVs rather the arrangement where HDVs follow AVs reduces the total time to cross the intersection. However, if AVs follow too many HDVs, the total time increases. Empirical evidence also suggests that the combination where the AV and HDV alternates can increase the average velocity of the vehicles at the intersection. \end{color}

\section{Problem Formulation}\label{sec:system_model}
In this section, we, first, describe the traffic intersection system we consider (Section~\ref{sec:zone}). We describe the safety constraint the traffic-intersection controller must satisfy while taking the decision (Section~\ref{sec:safety}). We describe how the traffic-intersection controller changes the red and green lights at a lane based on the optimal sequence (Section~\ref{sec:traffic_light}).  Subsequently, we describe the dynamics of AVs and HDVs (Sections~\ref{sec:avlead} and \ref{sec:hdv}) which describes how the vehicles approach the intersection for a given sequence of the vehicles in which they would enter the intersection. 

\subsection{The Intersection and Control Zone}\label{sec:zone}
We consider a merging zone in an urban setting. Let $N(t)$ be the cumulative number of vehicles which are present the system at time $t$.  The intersection system consists of a control zone and a merging zone. When a new vehicle arrives in the control zone, the traffic controller computes a sequence which denotes the order in which the existing vehicles will enter and exit the intersection system or the merging zone. 

The vehicles form a queue at different lanes at an intersection. Note that if a queue is non-empty, a vehicle must wait until at least all of the vehicles in-front of it enter the merging zone before it enters the merging zone. The control zone and the merging zone are depicted in the figure~\ref{fig:zone}.  The control zone is of length $L$ in each lane. After the control zone, a vehicle enters an intersection or merging zone of length $S$. 
\begin{figure}
\begin{center}
\includegraphics[width=90mm, height=70mm]{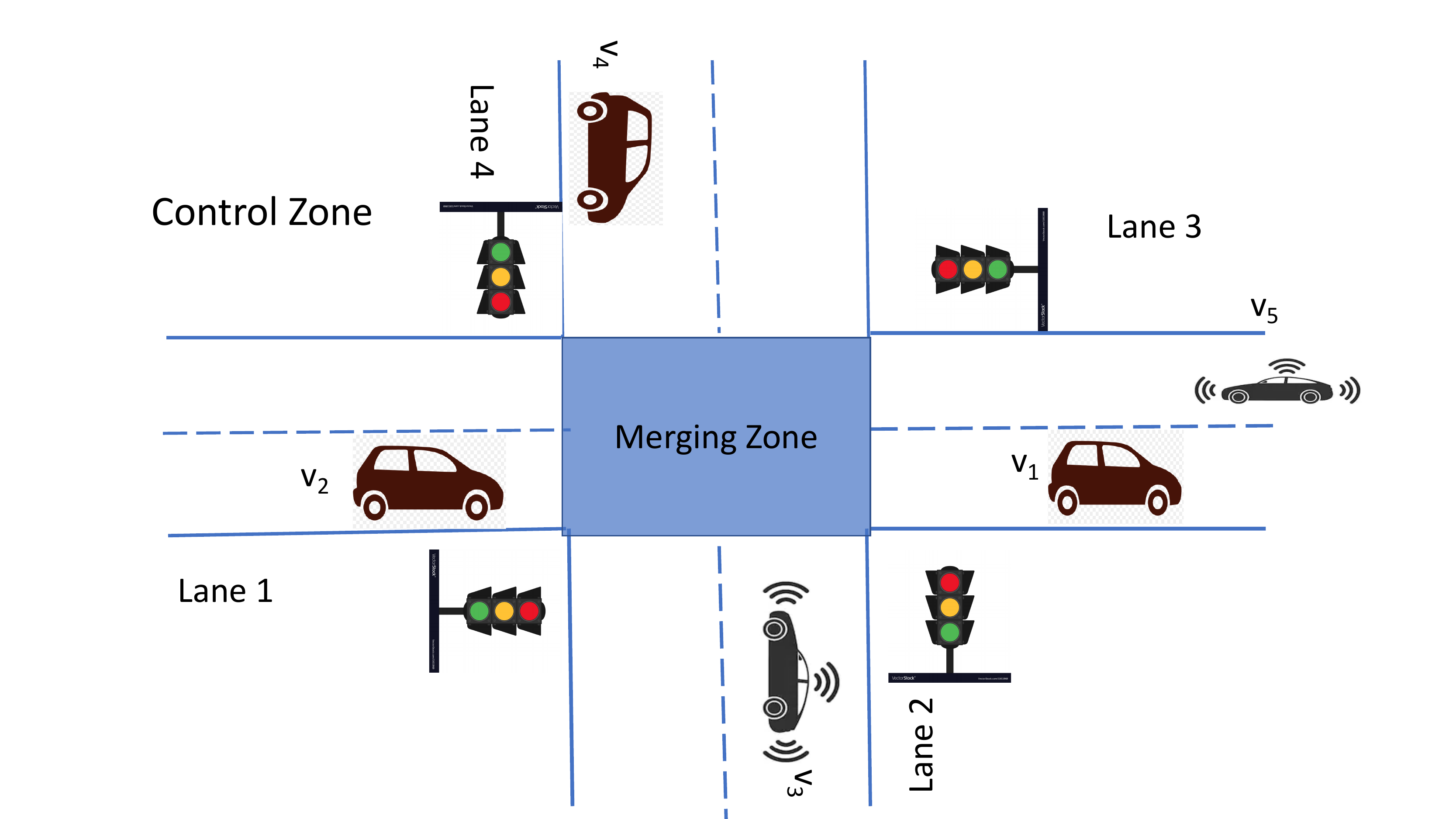}
\caption{The figure illustrates a traffic intersection control where a traffic controller decides the optimal sequence of vehicles which would cross the intersection and take actions accordingly. AVs are denoted with signs of being capable to communicate with the traffic controller. Vehicles from lane 1 can collide with vehicles from lane 2 and lane 4. Vehicles from lane 2 can collide with vehicles from lane 3 and lane 1.}
\label{fig:zone}
\end{center}
\vspace{-0.3in}
\end{figure}


Fig.~\ref{fig:zone} depicts that the scenario where  vehicle $V_5$ enters the system with $3$ more vehicles ($V_2, V_3$ and $V_4$) are already in the system. The traffic controller needs to determine the optimal exit sequence among the existing vehicles  in order to minimize the total travel time, maximize the velocity they can travel, and minimize the frequent change in velocities of the vehicles. Note that the vehicles from lane $3$ and vehicles from lanes $2$, and $4$ can not cross the intersection simultaneously. Hence, the traffic controller needs to determine vehicles from which lane should enter while closing the other lanes in order to avoid collision.


The traffic controller can coordinate with the AVs by informing the AVs know the time at which they can cross the intersection if they are the lead-vehicles (Definition~\ref{defn:lead}) or when they {\em can not enter} the intersection immediately after their preceding vehicles as the vehicles from other lanes will enter the intersection before. The traffic controller can  not coordinate with the HDVs. Rather, the HDVs only follow the traffic-lights.  If a traffic controller decides to stop a vehicle at a lane in order to allow vehicles from other lanes to enter the intersection it would impact the travel times of all the vehicles following the vehicle. 

\begin{color}{black}
Throughout this paper, the following assumption is in place.
\end{color}

\begin{assumption}
\begin{itemize}
\item The vehicles are going straight and no turn is allowed. The vehicles are either going from east to west and west to east, or north to south and south to north. 
\item The  traffic-controller can not communicate with the HDV regarding the time at which they can enter the intersection.  
\item The traffic intersection controller can sense the HDV's position and velocity accurately.
\item The AVs are connected and controllable. The traffic controller can send a signal when to enter the merging zone to a AV.
\item Lane changing is not acceptable within the control and intersection zone.
\end{itemize}
\end{assumption}
As described in Fig.~\ref{fig:zone}, there are 4 conflict points even for this basic set-up. Note that in practice, the traffic intersection controller can only have an estimate of a vehicle's position and velocity. In that case, we consider the worst case position and velocity of a vehicle. Empirically, we show the impact of the estimation errors in Section~\ref{sec:sensitivity}.  \begin{color}{black} Note that even though we have not considered any turning behavior, our model can be extended to accommodate the lane changing behavior as discussed in Section~\ref{sec:turn}.\end{color}

\subsection{Notations and Significance}
We introduce some notations which we use throughout the paper.  At the control zone, a vehicle is completely specified by the tuple $(i,j)$ where  $j$ denotes the lane and $i$ denotes the $i$-th vehicle in the lane from the intersection. The vehicles are numbered in the order they are closer to the intersection. Vehicle $(i-1,j)$ is the vehicle immediately preceding the vehicle $i$. Vehicle $i$ at lane $j$ enters the control zone at time $t_{i,j}^{0}$. 

The position and speed of the vehicle $(i,j)$ at time $t$ are denoted by $p_{i,j}(t)$ and $v_{i,j}(t)$ respectively.
\begin{color}{black}
$v_{m}$  is the speed limit. $v_{free}$ is the desired velocity of the system. In our model, we consider $v_{free}=v_m$. However, in the congested setting, for example, if there is a congestion in the neighboring intersections then $v_{free}$ can be smaller. Our model accommodate any velocity $v_{free}$ as discussed in Section~\ref{sec:free_velocity}. A vehicle enters the intersection at a speed $v_m$ if there is no congestion, otherwise, it may enter the intersection at a different speed.  \end{color}

\begin{color}{black}
\begin{definition}
For vehicle of any type, the time at which vehicle $(i,j)$ would cover the distance $L+S$ starting from time $t_{i,j}^{0}$ and travelling at a constant speed $v_m$ is given by $t_{i,j}^{\prime}$:
\begin{align}\label{tprime}
t_{i,j}^{\prime}=t_{i,j}^{0}+\dfrac{L+S}{v_{m}}.
\end{align}
The time at which the vehicle $(i,j)$ would cover the distance $L$ starting from $t_{i,j}^{0}$ at a constant speed $v_m$ is
\begin{align}
t_{i,j,m}^{\prime}=t_{i,j}^{0}+\dfrac{L}{v_{m}}.
\end{align}
\end{definition}

\noindent
$t_{i,j,m}^{\prime}$ and $t_{i,j}^{\prime}$ indicate the times vehicle $(i,j)$ would reach the intersection and cross the intersection if it would not be subject to any traffic. Note that $t_{i,j}^{\prime}$ ($t_{i,j,m}^{\prime}$, resp.)  is equal to the initial time plus the time the vehicle takes to travel the distance $L+S$ ($L$, resp.) at the maximum speed. Thus, they represent the times the vehicle take to exit the control and merging zones in the idealistic scenario where the vehicle would exit the system if there is no traffic.  These values will be used to compute the total delay the vehicle $(i,j)$ would face because of the traffic and the decision of the traffic-intersection controller. 
\end{color}

When the vehicle $(i,j)$ enters the control zone, its position $p_{i,j}(t_{i,j}^{0})=0$. When the vehicle $(i,j)$ in the control zone enters the intersection or merging zone then it is denoted as $(i^{\prime}_i,j)$ which explicitly denotes the identity of the  at the control zone. Once a vehicle exits the intersection, that vehicle is removed from the system. 
\begin{color}{black}
\begin{definition}
We denote $t_{i,j}^{m}$ and $t_{i,j}^f$ as the times the vehicle $(i,j)$ enters and exits the intersection respectively. 

Formally, $t_{i,j}^m=\inf\{t|p_{i,j}(t)>L\}$, and $t_{i,j}^{f}=\inf\{t|p_{i,j}(t)=L+S\}$. 
\end{definition}
Note that in the idealistic scenario, $t_{i,j}^{m}=t_{i,j,m}$ and $t_{i,j}^f=t_{i,j}^{\prime}$.  However, because of the traffic and the decision of the traffic-intersection controller those relationships may hold in our scenario. 
\begin{definition}\label{def:delay}
We denote delay as $t_{i,j}^f-t_{i,j}^{\prime}$ since this is the additional time the vehicle $(i,j)$ takes to exit the system because of the traffic and the traffic-intersection controller's decision.
\end{definition}

\begin{definition}\label{defn:lead}
When $i=1$ for vehicle $(i,j)$, then it is the lead-vehicle in the control zone of lane $j$.
\end{definition}
\end{color}
\begin{definition}\label{defn:lj}
Let $\mathcal{L}_j$ denote the set of vehicles in a lane $j$.  The set $\mathcal{L}_j$ is ordered in the sequence the vehicles arrive in the lane $j$. 
\end{definition}
The first element in the set $\mathcal{L}_j$ is the first one among the existing vehicles in lane $j$ to arrive in the lane $j$ and it is the lead vehicle in lane $j$. 

\begin{definition}\label{defn:mj}
Let $\mathcal{M}_j$ denote the set of conflicting lanes of lane $j$. 
\end{definition}
In Fig.~\ref{fig:zone}, lanes 2 and 4 belong to $\mathcal{M}_1$. Since we have $4$ lanes, for the non-conflicting lane $j^{\prime}$ of $j$, $M_{j^{\prime}}=M_j$. For example, in lanes 2 and 4 also belong to $\mathcal{M}_{j^{\prime}}$.

\begin{definition}
Let $b_{i,j,l,k}=0$ if the vehicle $(i,j)$ enters the intersection before the vehicle $(l,k)$ where $k\in \mathcal \mathcal{M}_j$ and $l\in \mathcal{M}_k$, otherwise $b_{i,j,l,k}=1$.
\end{definition}
If $b_{i,j,l,k}=0$, then $b_{l,k,i,j}=1$. Note that the traffic-controller decides $b_{i,j,l,k}$ which characterizes the sequence order. Since no passing is allowed within a lane, it is obvious that vehicle $(i,j)$ can only enter the intersection after $(i-1,j)$ if $i>1$. We only need to compute sequence order among the vehicles at the conflicting lanes. At time $t$, the traffic-intersection controller decides $b_{i,j,l,k}$ to schedule the vehicles among the conflicting lanes by maintaining the safety constraint which we describe next. 

\subsection{Safety Constraint}\label{sec:safety}
The traffic controller needs to ensure that the vehicles from colliding lanes can not be present at the intersection at the same time. Thus, vehicle $(i,j)$ and any vehicle from the lanes $\mathcal{M}_j$ can not be present simultaneously at the intersection.  For example, in Fig.~\ref{fig:zone} the vehicles from lane $1$ and $2$ can not cross the intersection at the same time since they can collide with each other. Thus, lanes $1$ and $2$ are colliding lanes with each other. Vehicles from a lane can only enter the merging zone when the there are no vehicle from the colliding lane in the merging zone. 


Recall that $t_{i,j}^{m}$ is the time when the vehicle $(i,j)$ enters the intersection. Hence, the safety or feasibility constraint for any vehicle $i\in \mathcal{L}_j$ is 
\begin{align}\label{eq:safety}
(t_{i,j}^{m},t_{i,j}^{f})\cap (t_{l,k}^{m},t_{l,k}^{f})=\Phi \quad \forall l\in \mathcal{L}_k,\forall k\in \mathcal{M}_j
\end{align}
The above constraint ensures that any vehicle in the conflicting lanes can not enter the intersection at the same time with vehicle $(i,j)$. Note that certainly the constraint can be relaxed and vehicles from conflicting lanes $\mathcal{M}_j$ can enter at times $t_{i,j}^{f}-\epsilon$ where $\epsilon$ is small enough to ensure that the vehicles will not collide. Our analysis and model can be easily extended to the above scenario. 

 (\ref{eq:safety}) can be represented in terms of the decision variable $b_{i,j,l,k}$
\begin{align}\label{order}
& t_{i,j}^{m}\geq b_{i,j,l,k}t_{l,k}^f \quad \forall l\in \mathcal{N}_{j,k},\forall k\in \mathcal{M}_j\nonumber\\
& t_{l,k}^{m}\geq (1-b_{i,j,l,k})t_{i,j}^f
\end{align}
If $b_{i,j,l,k}=0$, then, the vehicle $(i,j)$ would enter the intersection before the vehicle $(l,k)$ and the vehicle $(l,k)$ can only enter the intersection at time $t_{i,j}^f$, the time at which the vehicle $(i,j)$ exits the intersection. The traffic-controller decides $b_{i,j,l,k}$ which in turn specifies the red-light and green-light timings at a lane, which in turn governs the dynamics of the vehicles and the times $t_{i,j}^m$ and $t_{i,j}^f$. 
\begin{color}{black}
\subsection{Traffic Light System}\label{sec:traffic_light}
In this section, we describe how the decision $b_{i,j,l,k}$ characterizes the time when the traffic-light is red or green at each lane. The traffic controller computes an order of vehicles in which they will enter the intersection or merging zone. When a vehicle enters the merging zone, the corresponding traffic-light at that lane must be green.

If the traffic-intersection controller decides $b_{i,j,l,k}=1$, and $b_{i-1,j,l,k}=0$,i.e., the vehicle $(l,k)$ from conflicting lane $k$ enters the intersection before vehicle $(i,j)$ and after vehicle $(i-1,j)$ enters the intersection, then the traffic controller will put a red-light at lane $j$  after the vehicle $(i-1,j)$ enters the intersection. We now describe when the vehicle $(i,j)$ will face green-light again.  

First, we introduce some notations. 
\begin{definition}\label{defn:ji}
  Let  $j_i(k)=\arg\max_{l\in \mathcal{L}_k, k\in \mathcal{M}_j}\{l|b_{i,j,l,k}=1\}$
\end{definition}
$j_i(k)$ denotes the largest index among the vehicles $(l,k)$ in the conflicting lane $k$ of the lane $j$ which will enter the merging zone before the vehicle $(i,j)$. Vehicles $(1,k),\ldots,(j_i(k),k)$ will enter the intersection before the vehicle $(i,j)$. Hence, the vehicle $(j_i(k)+1,k)$ (if it exists) can only enter the merging zone after the vehicle $(i,j)$ exits the intersection (otherwise $j_i(k)+1$ would have been the maximum). 

Note that if $j_{i-1}(k)$ and $j_i(k)$ are not the same for any $k\in \mathcal{M}_j$, then vehicle $(i,j)$ can not enter the intersection immediately preceding the vehicle $(i-1,j)$. Rather vehicles $((j_{i-1}(k)+1,k),\ldots, (j_i(k),k))$ from the conflicting lane $k$ will enter the intersection before the vehicle $(i,j)$. Thus, in between a red-light must be on at lane $j$. 

Let $\phi^r_j(t)$ and $\phi^g_j(t)$ denote whether the traffic-light is red and green respectively at lane $j, j=1,\ldots,4$ at time $t$. Specifically, if $\phi^r_j(t)=1$, the traffic-light is red at lane $j$, otherwise it is $0$. On the other hand, $\phi^g_j(t)=1$ denotes that the traffic-light is green at lane $j$ at time $t$. \begin{color}{black}We do not consider any amber-light and we denote how to accommodate the amber light in Section~\ref{sec:amber}. \end{color}

If $i\neq 1$ when it enters the control zone, then, it will be the lead-vehicle after $t_{i-1,j}^m$, i.e, when the vehicle $(i-1,j)$ enters the intersection. If $j_{i-1}(k)$ and $j_i(k)$ are different for any $k\in \mathcal{M}_j$, then after the vehicle $(i-1,j)$ enters the intersection, the vehicle $(i,j)$ has to wait till the time $(j_i(k),k)$ exits the intersection. Note that there are two conflicting lanes of lane $j$ since we consider a 4-way intersection. Hence, we also need to consider $b_{i,j,l,k_1}$ where $k_1\in \mathcal{M}_j$, and $k_1\neq k$.

If $j_i(k)$ and $j_{i-1}(k)$ are different for any $k\in \mathcal{M}_j$, $\phi^g_j(t_{i-1,j}^m)=1$, however, $\phi^r_j(t_{i-1,j}^m+)=1$, i.e., the traffic-light is red at lane $j$ as soon as the vehicle $(i-1,j)$ enters the vehicle. Hence, the traffic-light can be adapted at any lane. The green-light will be switched on at lane $j$ again at time
\begin{align}\label{eq:t_gij}
t^g_{i,j}=\max_{k\in \mathcal{M}_j}t_{j_i(k),k}^{f}
\end{align}
Hence, $\phi^r_j(t)=1$ for $t\in (t_{i-1,j}^{m},t^g_{i,j})$. 
Specifically, the vehicle $(i,j)$ has to wait till the time all the vehicles $(l,k)$ such that $b_{i,j,l,k}=1$ exit the intersection. If $j_i(k)$ and $j_{i-1}(k)$ are the same for all $k\in \mathcal{M}_j$, then the vehicle $(i,j)$ will face the green-light after the vehicle $(i-1,j)$ enters the merging zone. In this case $\phi^r_j(t)=0$ for $t\in (t_{i-1,j}^m,t_{i,j}^m)$. Thus, the vehicle $(i,j)$ does not observe the red-light when it becomes the red-light. 

The decision $b_{i,j,l,k}$ completely characterizes the times at which the red-light or green-light will be switched on at lanes $j$. For example, first see among the front vehicles, i.e, $i=1$ at lane $j$ and $l=1$ at lane $k\in \mathcal{M}_j$. If $b_{1,j,1,k}=0$, for all $k\in \mathcal{M}_j$, then the green-light will be on for lane $j$ if $j_2(k)$ and $j_1(k)$ are the same then green-light will be on till the time at least vehicle $(2,j)$ enters the intersection at lane $j$. If $j_i(k)$ is different for some $i$ at lane $j$ compared to $j_{i-1}(k)$, the red-light will be switched on after the vehicle $(i,j)$ enters the intersection. The vehicles till $(j_i(k),k)$  for $k\in \mathcal{M}_j$ will then exit the intersection before again the green-light will be switched on at lane $j$. 

On the other hand, if $b_{i,j,l,k}=0$ for some $k\in \mathcal{M}_j$, then, the vehicle $(l,k)$ would enter the intersection before the vehicle $(i,j)$ where $i=1$. The time the green-light will be on at lane $j$ is computed from $t^g_{1,j}$. 

Finally, the times at which the vehicles enter and exit the intersection depend on the red-light and green-lights which in turn depend on the   decision $b_{i,j,k,l}$. Thus, we slight abuse of notation, we denote $t_{i,j}^{m}(\textbf{b})$ and $t_{i,j}^f(\text{b})$ as functions of $\textbf{b}=\{b_{i,j,k,l}\}$. We sometimes make this dependence explicit. The dependence are explicitly defined by the dynamics of the vehicles and $b_{i,j,l,k}$ which we describe next. 
\begin{assumption}\label{assum2}
If there is no vehicle in the control zone at a lane, the red-light will be switched on.
\end{assumption}
The above assumption is made for the ease of exposition and can be relaxed. The above assumption states that when the vehicle $(i,j)$ is the lead-vehicle at the control zone as it enters the lane $j$, it will initially face a red-light. The green-light will be switched on based on the decision made by the traffic-controller. 

Note that when $j_i(k)$ and $j_{i-1}(k)$ are different, then the vehicle $(j_i(k),k)$ would enter the intersection before the vehicle $(i,j)$. However, the vehicle $(j_i(k)+1,k)$ would not enter the intersection before the vehicle $(i,j)$. 
Hence, the red light will be switched on at lane $k$ when the vehicle $(j_i(k),k)$ enters the intersection. Thus, in the time interval $(t^m_{j_i(k),k},t^g_{i,j})$ all the traffic-lights are red. This ensures that the intersection is devoid of any vehicles from the conflicting lane $k$ before the traffic-light is made green at lane $j$. Whenever a schedule contains two consecutive vehicles from conflicting lanes, red-light will be on for all the lanes in order to make sure that the vehicle in the preceding order exits the intersection. 

\end{color}

\subsection{Dynamics of AV and communication with the traffic controller in the control zone}\label{sec:avlead}

In this section, we describe the dynamics of the vehicles impacted by the decisions of the traffic-intersection controller. We first describe the dynamics for the AV $(i,j)$ where $j_i(k)$ and $j_{i-1}(k)$ are different  (cf.Definition~\ref{defn:ji}) for some $k\in \mathcal{M}_j$. Here, we also discuss the scenario where the AV enters the lane as a lead-vehicle. The scenario where for the AV $(i,j)$ $j_i(k)$ and $j_{i-1}(k)$ are the same is considered in Section~\ref{sec:avfollow}.   

In the above scenario where $j_i(k)$ and $j_{i-1}(k)$ are different for some $k\in \mathcal{M}_j$ , the traffic-intersection controller informs the AV when it can enter the intersection by computing $t^g_{i,j}$ (cf. (\ref{eq:t_gij})). The traffic-controller also informs the time at which the vehicle $(i,j)$ can enter the intersection when the AV $(i,j)$ is the lead-vehicle, i.e., it is the only vehicle at the control zone when it enters. 

\noindent
AV $(i,j)$ is governed by the following dynamics in this scenario
\begin{align}\label{dyna}
& \dfrac{dp_{i,j}}{dt}=v_{i,j}(t),\quad p_{i,j}(t_{i,j}^0)=0, \nonumber\\& \dfrac{dv_{i,j}}{dt}=u_{i,j}(t), \quad v_{i,j}(t_{i,j}^0)=v_{ini}
\end{align}
Recall that $t_{i,j}^0$ is the time when vehicle $(i,j)$ enters the system. $t_{i,j}^{f}$ is the time the vehicle $i$ leaves the system. $u_{i,j}(t)$ is the control signal or acceleration for the AV $(i,j)$. 

To ensure that the speed and the control input of the AVs are within acceptable limits, we have the following
\begin{align}\label{constr}
u_{min}\leq u_{i,j}(t)\leq u_{max}\nonumber\\
v_{min}\leq v_{i,j}(t)\leq v_{max}
\end{align}
\begin{color}{black} Note that in our setting, $v_{max}$=$v_m$.\end{color}

At time $t$, the traffic-controller computes the decision $b_{i,j,l,k}$. If $j_i(k)$ and $j_{i-1}(k)$ are different, then the traffic-intersection controller computes the time $t^{g}_{i,j}$ at which it can enter the intersection and informs the AV. The AV then updates its dynamics at time $t$. 

Since the AV can not enter the intersection before the time $t_{i,j,m}^{\prime}$, the traffic controller would let the AV know a time $t_{i,j}^{m}\geq t_{i,j,m}^{\prime}$ at which it can enter the intersection. The AV would like to approach the intersection at the maximum velocity $v_{m}$. 
We assume that the AV would try to minimize the total energy expenditure. Hence, the AV controller solves the following problem
\begin{align}\label{eq:min}
\text{minimize }& \int_{\tau=0}^{t_{i,j}^m-t}u_{i,j}(t)^2dt\nonumber\\
\text{subject to }& p_{i,j}(t_{i,j}^m)= L,\quad v_{i,j}(t_{i,j}^m)=v_{m}\nonumber\\
& (\ref{dyna}), \& (\ref{constr})
\end{align}
The initial conditions are the velocity and position of the vehicle $(i,j)$ at time $t$ when the optimization problem is solved. The first objective specifies the cost due to acceleration or deceleration.  The first constraint ensures that the distance traversed by the vehicle $(i,j)$ is $L$ at $t_{i,j}^{m}$ and the second constraint ensures that when the AV enters the intersection, it does so at the free speed. Note that $t_{i,j}^m=\max\{t^g_{i,j},\tilde{t}_{i,j}\}$. 

On the other hand, since the maximum acceleration is bounded by $a_{max}$, thus, we also have
\begin{align}
    t_1=\dfrac{v_m-\bar{v}}{a_{max}}\nonumber\\
    d=\bar{v}*t_1+\dfrac{1}{2}a_{max}t_1^2\nonumber\\
    t_2=(L-d-\bar{p})/v_m
\end{align}
i.e, it would take $t_2$ time to reach the intersection at the maximum speed since the acceleration is upper bounded by $t_2$ from the current velocity $v_{i,j}(t)=\bar{v}$, and the current position $p_{i,j}(t)=\bar{p}$. Thus, $t_{i,j}^{m}=\max\{t^g_{i,j},\tilde{t}_{i,j},t+t_2\}$.

\begin{color}{black}
Even though we consider that the vehicle $(i,j)$ enters the intersection at the maximum speed, our model can be extended to the scenario where the traffic-controller mandates that the velocity at the intersection be velocity other than the speed-limit in order to control the speed of vehicles at the neighboring intersections. How the intersection-controller chooses such a velocity is beyond the scope of this paper. 
\end{color}
\begin{theorem}\label{thm:1}
If a feasible solution exists, the solution is unique,  and we denote the solution as
\begin{align}\label{eq:leadavdyn}
(u_{i,j}(t),v_{i,j}(t),p_{i,j}(t))=F(t_{i,j}^{0},t_{i,j}^{m},v_{initial},L)
\end{align}
where $v_{initial}=v_{free}$ is the initial velocity and $L$ is the distance need to be transversed from time $t_{i,j}^{0}$ to $t_{i,j}^{m}$. 
\end{theorem}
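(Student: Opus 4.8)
The plan is to recognize that problem~(\ref{eq:min}) is a convex program posed in a Hilbert space and to exploit strict convexity of its cost. I would work with the control $u_{i,j}(\cdot)$ in $L^2$ over the optimization interval (equivalently, over $[t_{i,j}^{0},t_{i,j}^{m}]$ in the form used for $F$), with $v_{i,j}$ and $p_{i,j}$ the associated absolutely continuous trajectories obtained by integrating~(\ref{dyna}). Since the dynamics~(\ref{dyna}) are linear, the map $u_{i,j}\mapsto(p_{i,j},v_{i,j})$ is affine; hence the terminal conditions $p_{i,j}(t_{i,j}^{m})=L$ and $v_{i,j}(t_{i,j}^{m})=v_{m}$ are affine constraints on $u_{i,j}$, and the pointwise bounds in~(\ref{constr}) on $u_{i,j}$ and on $v_{i,j}$ each carve out a closed convex subset of $L^2$. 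Therefore the feasible set $\mathcal{U}$ is closed and convex; by hypothesis it is nonempty. The objective $J(u_{i,j})=\int u_{i,j}^2$ is continuous, coercive, and \emph{strictly} convex on $L^2$.

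First I would establish existence of a minimizer by the direct method. Let $\{u^{(n)}\}\subset\mathcal{U}$ be a minimizing sequence; coercivity of $J$ makes it bounded in $L^2$, so along a subsequence $u^{(n)}\rightharpoonup u^{\star}$ weakly. Being convex and closed, $\mathcal{U}$ is weakly closed, so $u^{\star}\in\mathcal{U}$; being convex and continuous, $J$ is weakly lower semicontinuous, so $J(u^{\star})\le\liminf_n J(u^{(n)})=\inf_{\mathcal{U}}J$, i.e., $u^{\star}$ is optimal.

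For uniqueness I would invoke strict convexity directly. Suppose $u_1,u_2\in\mathcal{U}$ are both optimal with common value $J^{\star}$. Convexity of $\mathcal{U}$ gives $\tfrac12(u_1+u_2)\in\mathcal{U}$, and the parallelogram law yields $J\!\left(\tfrac12(u_1+u_2)\right)=J^{\star}-\tfrac14\lVert u_1-u_2\rVert_{L^2}^2$; optimality of $J^{\star}$ then forces $\lVert u_1-u_2\rVert_{L^2}=0$, i.e., $u_1=u_2$ a.e. Integrating~(\ref{dyna}) from $t_{i,j}^{0}$ with initial velocity $v_{initial}$ determines $v_{i,j}$ and then $p_{i,j}$ uniquely from $u_{i,j}$, so the whole triple is unique; we denote it $F(t_{i,j}^{0},t_{i,j}^{m},v_{initial},L)$. (Incidentally, the explicit form is available from Pontryagin's minimum principle: away from active constraints the costate is affine in $t$, so $u_{i,j}$ is a clipped affine function of time, $v_{i,j}$ piecewise quadratic, and $p_{i,j}$ piecewise cubic, with the finitely many free parameters fixed by the two terminal conditions — but this is not needed for the statement.)

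The main obstacle is purely technical bookkeeping in the functional-analytic setting: fixing the correct spaces and verifying that $\mathcal{U}$ is genuinely weakly closed. In particular one must check that the pure state constraint $v_{min}\le v_{i,j}(t)\le v_{m}$, which with $v_{initial}=v_{free}=v_{m}$ is active at the left endpoint, still defines a weakly closed convex set (it is an intersection over $s$ of half-spaces $\{u_{i,j}:\int_{t_{i,j}^{0}}^{s}u_{i,j}\le v_{m}-v_{initial}\}$ cut out by weakly continuous linear functionals) and that it does not render $\mathcal{U}$ empty; one must likewise confirm that $u_{i,j}\mapsto v_{i,j}(t_{i,j}^{m})$ and $u_{i,j}\mapsto p_{i,j}(t_{i,j}^{m})$ are weakly continuous (integration against bounded kernels). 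Once these are in place, everything else is a routine consequence of coercivity, weak lower semicontinuity, and strict convexity.
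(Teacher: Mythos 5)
Your proof is correct, but it proceeds by a genuinely different route than the paper. The paper's proof (Appendix~\ref{proof}) is constructive: it writes the Hamiltonian for (\ref{eq:min}), applies the Euler--Lagrange/Pontryagin conditions to conclude that off the constraint boundaries $u_{i,j}$ is affine in $t$, $v_{i,j}$ quadratic, and $p_{i,j}$ cubic, pins down the four free coefficients from the four boundary conditions, and then runs a case-by-case analysis (Cases i--v) for when the bounds in (\ref{constr}) become active, asserting in each case that the matching conditions determine the switching times and coefficients uniquely. You instead give an abstract direct-method argument: the linearity of (\ref{dyna}) makes the feasible set $\mathcal{U}$ a nonempty, closed, convex subset of $L^2$ (the terminal conditions are affine, the pointwise control and state bounds are intersections of weakly closed half-spaces), existence follows from coercivity and weak lower semicontinuity, and uniqueness of $u_{i,j}$ follows from strict convexity via the parallelogram identity, after which $v_{i,j}$ and $p_{i,j}$ are recovered uniquely by integration. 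What each approach buys: the paper's computation is what actually makes $F(\cdot)$ usable in the algorithm of Section~\ref{sec:algo}, since the traffic controller must evaluate the trajectory explicitly, whereas your argument delivers no formula; on the other hand, your argument establishes existence and uniqueness rigorously in one stroke, whereas the paper's uniqueness claim rests on the unproved assertions in Cases i--v that each system of matching conditions has a unique solution, and it never verifies that the candidate extremal is a global minimizer rather than merely a stationary point. The two are complementary: your convexity argument certifies that the paper's constructed extremal, whenever it is feasible, is in fact the unique optimum. The only point worth flagging is that the theorem as stated conditions on existence of a \emph{feasible} solution and then speaks of \emph{the} solution; your existence step (direct method) is what actually bridges that gap, so it is doing real work and should not be presented as optional.
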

\begin{proof}Please see Appendix~\ref{proof}. 
 \end{proof}

When $v_{ini}=v_m$, and $t_{i,j}^{m}=t_{i,j,m}^{\prime}$ ,$u_{i,j}(t)=0$ for all $t$ from the optimization problem in (\ref{eq:min}). The vehicle $i$ in lane $j$ would move at uniform maximum speed. Note that the function $F(\cdot)$ specifies the control acceleration, velocity and position at every time of the AV when the time to enter the intersection and the initial times are specified. 

\begin{remark}
The energy cost may be smaller in the scenario where we do not put the requirement that the velocity of the AV when it would enter the intersection must do at the speed close to the speed limit. We can consider such a scenario by relaxing the constraint. However, detailed analysis is kept for the future. 
\end{remark}

\begin{definition}\label{defn:assign}
If for a certain decision $\mathbf{b}$, there does not exist any feasible solution of (\ref{eq:min}),then we denote $t_{i,j}^{m}(\mathbf{b})=\infty$.
\end{definition}
Thus, if for a certain decision taken by the traffic-intersection controller, there is no feasible solution of (\ref{eq:min}), we specify the time $t_{i,j}^m$ as $\infty$. We utilize the above to discard such sequence which renders the solution infeasible when we describe the optimization problem for the traffic-intersection controller in Section~\ref{sec:opti}. 

\subsection{Human-Driven Vehicle Dynamics}\label{sec:hdv}
The traffic-controller can not communicate the time the HDV can enter the intersection specifically when it becomes the lead-vehicle in the control zone. Rather, the dynamics of the HDV when it is the lead-vehicle at the control zone, is impacted whether the traffic-light is red or green at time $t$.


\subsubsection{The dynamics when a HDV is the leading vehicle}\label{sec:hdvlead}

\begin{color}{black}
Generally, a HDV only brakes when it is at most $l$ ($l\leq L$) distance from the intersection  when it sees a red light. Thus, even if the red-light is on when the HDV is more than $l$-distance away from the intersection, the HDV would not decelerate.   If $l=L$, it means that the control zone is small, hence, the vehicle $(i,j)$ would decelerate as soon as it enters the control zone if the traffic-light is red at lane $j$. \end{color}

We assume that the HDV will decelerate with uniform retardation $a$. Hence, the dynamics of the HDV, when it encounters the red light, i.e, $\phi^r_j(t)=1$, is the following
\begin{eqnarray}
\dfrac{dv_{i,j}(t)}{dt}=\begin{cases}-a\quad\text{if }p_{i,j}(t)>L-l,\\
0\quad \text{otherwise.}\end{cases}\label{leadh}
\end{eqnarray}
\begin{color}{black}
the deceleration $a$ is such that the vehicle stops just before entering the merging zone. In particular, if the vehicle $(i,j)$ faces the red-light (i.e, the red-light is switched on) at a distance $s\leq l$ away from the intersection, then the magnitude of $a$ is
$\dfrac{v_s^2}{2s}$
where $v_s$ is the velocity when it faces the red-light.\end{color}

\begin{color}{black}
Note that it may happen that when the vehicle $(i,j)$ faces the red-light, it is very close to entering the intersection, then, the deceleration may be below the minimum value $u_{min}$. Specifically, if
\begin{align}\label{eq:lowerumin}
    \dfrac{v_s^2}{2s}>|u_{min}|
\end{align}
then, a deceleration is required which is lower than the minimum value. Thus, such a sequence would not be feasible. Hence, similar to Definition~\ref{defn:assign}, we make $t_{i,j}^{m}=\infty.$ In Section~\ref{sec:opti}, we then see that such a sequence will never be chosen. 
\end{color}


When a HDV sees a green light again, it would accelerate unless it is moving at the maximum speed. We assume that the HDV accelerates with a constant acceleration $a_{h}$. Thus, when $\phi^g_j(t)=1$, we have
\begin{align}\label{eq:dynh_acc}
\dfrac{dv_{i,j}(t)}{dt}=\begin{cases} a_h\quad \text{if } v_{i,j}(t)<v_{m},\\
0\quad \text{otherwise}\end{cases}.
\end{align}
Note that the red-light and green-light phases are governed by the decision $b_{i,j,l,k}$ taken by the intersection controller. 

\begin{color}{black}
\subsubsection{Dynamics of vehicles when it follows other vehicles}\label{sec:avfollow}
The dynamics of HDVs when they follow the other vehicles and the AV $(i,j)$ for which $j_i(k)$ and $j_{i-1}(k)$ are the same (i.e., it will face green-light after it becomes the lead-vehicle), are given by popular vehicle following model. We consider the following model

The dynamics of a HDV in such a scenario is described using the Intelligent Driver Model (IDM) \cite{treiber2000congested}.
It is an easy-to-tune adaptive cruise control system able to avoid vehicles collision in car-following mode.
The dynamics for the vehicle $(i,j)$ is given by:
\begin{equation}
\label{eq:IDM_follower_green}
\dot{v}_{i,j}(t)= u_{max}\left( 1-\left(\frac{v_{i,j}(t)}{v_m}\right)^4-\frac{\left(s^*_{\{i,i-1\}}(t)\right)^2}{s^2_{\{i,i-1\}}(t)+\epsilon^2}\right),
\end{equation}
where  $s_{\{i,i-1\}}(t)= p_{i-1,j}(t)-p_{i,j}(t)$ is the current inter-vehicle distance, $\epsilon$ is a small number which makes the denominator always non-zero, and $s^*_{\{i,i-1\}}$, the desired inter-vehicle distance:
\begin{eqnarray}
 s^*(v_{i,j}(t),\Delta v_{\{i,i-1\}}(t)) = s_0 + T_{i,j}v_{i,j}(t) + \nonumber\\
 \frac{v_{i,j}(t)\Delta v_{\{i,i-1\}}(t)}{2\sqrt{u_{\text{max}}|u_{\text{min}}|}},
\label{eq:dec_green}
\end{eqnarray}
where $\Delta v_{\{i,i-1\}}(t)=v_{i,j}(t)-v_{i-1,j}(t)$ is the vehicles' difference in speed, $u_{\text{min}}$ is the maximum deceleration, $s_0$ is the safe distance, and $T_{i,j}$ the safety time gap between two vehicles. Note that $s^{*}$ is the ideal distance that should be maintained between the vehicles $(i,j)$ and $(i-1,j)$ when the vehicle $(i,j)$ is moving at a speed $v_{i,j}(t)$ and the inter-vehicular speed is $\Delta v_{\{i,i-1\}}(t)$. $s^{*}$ increases as the vehicle $(i,j)$ approaches a slower vehicle (since the vehicle may need to brake). On the other hand, $s^{*}$ decreases as the vehicle $(i,j)$ approaches a faster vehicle. 

In order to ensure that the velocity is bounded between $v_{min}$ and $v_{max}$, we impose
\begin{align}\label{eq:constraint}
    v_{i,j}(t)=\max\{\min\{v_{i,j,}(t),v_{max}\},v_{min}\}
\end{align}

If the vehicle $(i,j)$ is an AV, we use the same dynamics. However, the parameter $T_{i,j}$ is smaller if the vehicle is an AV. In Section~\ref{sec:deviation}, we discuss how our model can accommodate parameter values different from the estimated ones. Note that our model can easily be extended to other popular vehicle following models such as Newell\cite{NEWELL2002195}, optimal velocity model, and full velocity model. We use the IDM since empirically it performs the best among all the other vehicle models to mimic the behavior of the vehicles \cite{zhu2018modeling,kanagaraj2013evaluation}. 
\end{color}



\begin{remark}\label{rmk:vehicledyn}
Note that velocity and acceleration of a vehicle is completely specified by the lead vehicle dynamics and the decision $b_{i,j,l,k}$. When the lead vehicle dynamics is known, the vehicle following the lead vehicle is computed according to the dynamics of the vehicles. Thus, in a recursive manner, the dynamics of all the existing vehicles in a lane are known. 

The time at which the vehicle $(i,j)$ will enter and exit the intersection given by the position model of the vehicle. Thus, the time all the vehicles will enter and exit the intersection are also known in recursive manner. 
\end{remark}
Note that the values of $b_{i,j,l,k}$ precisely determine $\phi^g_j(t)$ and $\phi^r_j(t)$ for a lane $j=1,\ldots,4$ as described in Section~\ref{sec:traffic_light}. Thus, the dynamics of vehicles are completely characterized once $b_{i,j,l,k}$ is known. 

Hence, the time at which the vehicle $(i,j)$ enters the intersection or merging zone computed at time $t$ is given by 
\begin{align}\label{enter}
t^{m}_{i,j}=\psi^m_{i,j}(t,\mathbf{b})
\end{align}
here $\psi(\cdot)$ also depends on the positions, and velocities of vehicles at time $t$. As mentioned $t^{m}_{i,j}$ is computed recursively starting from the first vehicle in the control zone at lane $j$. 

We represent the velocity and position profile of the vehicle $(i,j)$ as the function of the decision variable as the following
\begin{align}\label{pos_vel}
(v_{i,j}(t),p_{i,j}(t))=\zeta_{i,j}(t,\mathbf{b})
\end{align}
The velocity and position profiles are also obtained in a recursive manner based on the decisions $b_{i,j,l,k}$.
 
 \begin{color}{black}
\subsection{Dynamics of the vehicles in the merging zone}\label{sec:merge}
When in the merging zone, the vehicle $(i,j)$ is represented by $(i^{\prime}_i,j)$. if the vehicle $(i^{\prime}_i,j)$ is leader at lane $j$, its dynamics is given by (\ref{eq:dynh_acc}). The length of the merging zone is such that the velocity can reach the maximum value $v_m$ within the merging zone. If the vehicle is following another vehicle then its dynamics is given by (\ref{eq:IDM_follower_green}) irrespective of whether the vehicles is AV or HDV. Note that when a vehicle enters the merging zone, it does not face any traffic light anymore, hence, its dynamics is not impacted by the traffic-light. When the traffic-controller informs the AV the time at which it can enter the intersection, the AV enters at the maximum speed as specified in Theorem 1. Also, note that such an AV will also be the leading vehicle. Hence, the AV moves at the constant speed after entering the merging zone. 

The decision $b_{i,j,l,k}$ specifies when the vehicle $(i,j)$ will exit the intersection which is computed using the velocity and position profile of the vehicle. We characterize
\begin{align}\label{eq:exit}
t^f_{i,j}=\psi^f_{i,j}(t,\mathbf{b})
\end{align}
$\psi^f_{i,j}$ is inherently a function of the velocity and position of vehicle $(i,j)$ at time $t$ and decisions $b_{i,j,l,k}$. Similar to $\psi^m_{i,j}$, $\psi^m_{i,j}$ can also be computed recursively starting from the leading vehicle and $b_{i,j,l,k}$. 

\end{color}

\section{Traffic Controller's Decision-Making Strategy}
When a new vehicle $(i,j)$ arrives, the traffic controller decides the sequence at which the vehicle will exit the intersection. In this section, we, first, describe a relaxed problem which significantly reduces the computational complexity (Section~\ref{sec:relax}). Subsequently, we formulate the optimization problem, the solution of which gives the optimal sequence (Section~\ref{sec:opti}). 

\begin{color}{black}
We introduce a notation which we use throughout this section. Let $Seq(t)$ be the  order of the vehicles present at time $t$ at the control zone which will enter the intersection. When a vehicle exits the control zone and enters the merging zone, the vehicle is removed from the control zone and put into the merging zone system. All the vehicles ids' at that lane are reduced by $1$, i.e., the vehicle $(i,j)$ becomes $(i-1,j)$ if a vehicle from lane $j$ enters the merging zone. Once a vehicle exits the intersection, that vehicle is removed from the system. The traffic-intersection controller only seeks to schedule vehicles at the control zone. \end{color}

\begin{color}{black}
\subsection{Reducing the Complexity}\label{sec:relax}

The possible number of combinations increases exponentially with the number of vehicles in the system. For example, if there are two conflicting lanes and there are $N$ number of vehicles in each lane, the possible number of combinations is $\binom{2N}{N}$ which is of the order of $N^N$. Further, the dynamics of the vehicles are non-linear functions of the decision. Hence, finding an optimal sequence is inherently computationally challenging. 

The traffic controller needs to decide the sequence order in real time, possibly within a few milliseconds. In order to reduce the complexity, we reduce the number of possible combinations by assuming the following:

\begin{assumption}\label{assum_reduce}
When a new vehicle $(i,j)$ (i.e., it arrives at lane $j$), we only consider the sequence order of the new vehicle  and the existing vehicles at the conflicting lanes $\mathcal{M}_j$. The relative order among the vehicles other than $(i,j)$ are kept the same. 
\end{assumption}
 
 The above assumption entails that if the sequence order before a new vehicle $(i,j)$ arrives be $S^{old}$, i.e,, $S^{old}=Seq(t_{i,j}^{0}-)$ where $t_{i,j}^{0}-$ is the time just before $t_{i,j}^{0}$, then the sequence order after the new vehicle $(i,j)$ arrives be $S^{new}=Seq(t_{i,j}^0)$ such that $S^{new}(m)=S^{old}(m)$ for any $m\leq m_1$ where$S^{old}(m_1)=(i-1,j)$. Thus, the order of the vehicles till the preceding vehicle $(i-1,j)$ enters the intersection is kept unchanged. The order of the vehicles after the $(i-1,j)$ enters the intersection are changed. 
 
 The relative order of  any vehicle $(i^{\prime},j^{\prime})$ and  $(l,k)$ where $k\in \mathcal{M}_{j^{\prime}}$ is also kept unchanged if $(i,j)\neq (i^{\prime},j^{\prime})$. Thus, if the vehicle $(i^{\prime},j^{\prime})$ enters the intersection before the vehicle $(l,k)$ that order is maintained. Hence, one only needs to compute the optimal sequence among the newly entered vehicle $(i,j)$ and the vehicles at the conflicting lanes.   If there are $n$ number of vehicles in the conflicting lanes, we need to consider $n$-possible combinations in the worst case. Hence, the computational complexity reduces from exponential order to the  linear order of the number of vehicles in the conflicting lanes. We later show that such a relaxation also simplifies the solution of an optimization problem. 


The communication cost is also reduced. Note that the traffic controller needs to inform the AVs when they can enter the intersection if the current time needs to be updated because of the change in the schedule. Since potentially, at least the order of the vehicles till the vehicle $(i-1,j)$ enters the intersection is kept unchanged, thus, for those vehicles no additional information is required to be sent. 
\end{color}

Thus, we only compute $b_{i,j,l,k}$ with keeping other decisions $b_{i^{\prime},j^{\prime},l,k}$ intact. In the following, we describe the optimization problem which chooses the optimal $b_{i,j,l,k}$. 

\begin{color}{black}
\subsection{Objective}\label{sec:opti}
When the new vehicle $(i,j)$ arrives the traffic-controller decides $b_{i,j,l,k}$ only among $(i,j)$ and the vehicles $(l,k)$ where $l\in \mathcal{L}_k$, $k\in \mathcal{M}_j$, i.e., the vehicles in the conflicting lanes. 
We, first, introduce a notation which we use throughout this section. $\mathbf{b}_{-(i,j)}$ consists of all the decisions apart from the decisions $b_{i,j,l,k}$ for all $(l,k),\forall l\in \mathcal{L}_k,k\in \mathcal{M}_j$.

\subsubsection{Computation of the vehicles' dynamics}
As new vehicle $(i,j)$ arrives at time $t_{i,j}^{0}$ and the traffic-controller decides $b_{i,j,l,k}$ at time $t_{i,j}^0$. The traffic controller computes the profiles of the vehicles based on the decision $b_{i,j,l,k}$ since the other decisions are kept unchanged, $b_{i,j,l,k}$ is sufficient to compute the dynamics of the vehicles as described in Sections~\ref{sec:avlead} and \ref{sec:hdv}. Note that if $b_{i,j,l,k}=1$, the profile of the vehicle $(l,k)$ remains the same as it is computed before the vehicle $(i,j)$ enters the system. However, if $b_{i,j,l,k}=0$, the profile of the vehicle $(l,k)$ needs to be recomputed as the vehicle $(i,j)$ now enters the intersection before the vehicle $(l,k)$. The traffic-intersection controller only recomputes those velocity and positions of vehicles' which are changes because of $b_{i,j,l,k}$. 

 For vehicle $(l,k)$, the time when it can enter the intersection is given by (cf.(\ref{enter}))
\begin{align}\label{eq:entry_exit}
t_{l,k}^m(\mathbf{b})=\psi^m_{l,k}(t_{i,j}^{0},\mathbf{b})\nonumber\\
t_{l,k}^{f}(\mathbf{b})=\psi^f_{l,k}(t_{i,j}^{0},\mathbf{b}),
\end{align}
it indicates that the time is a function of the decision $b_{i,j,l,k}$. The position and velocity profiles are given by  (cf.(\ref{pos_vel}))
\begin{align}\label{eq:vel}
(v_{i,j}(t),p_{i,j}(t))=\zeta(t_{i,j}^{0},\mathbf{b})
\end{align}

If the vehicle $(i^{\prime},j^{\prime})$ has entered just before the the vehicle $(i,j)$ in the system. The  computed time is given by $t_{l,k}^{m}(\mathbf{b}_{-(i,j)})$ which does not include the decision $b_{i,j,l,k}$. Thus, because of the decision $b_{i,j,l,k}=0$, the additional time the vehicle $(l,k)$ now takes to enter the intersection is $t_{l,k}^m(\mathbf{b})-t_{l,k}^{m}(\mathbf{b}_{-(i,j)}$. 
\begin{definition}
We define the delay imposed by the newly entered vehicle $(i,j)$ on the vehicle $(l,k)$ as $\tau_{i,j,l,k}=t_{l,k}^m(\mathbf{b})-\tilde{t}_{l,k}^m(\mathbf{b}_{-(i,j)})$.
\end{definition}
$\tau_{i,j,l,k}=0$ if $b_{i,j,l,k}=1$ as the vehicle $(i,j)$ does not cause any change to the profile of the vehicle $(l,k)$. 

Note that even the vehicle in the non-conflicting lane $\tilde{j}$ of lane $j$ (e.g.,  lane $3$ of lane $1$) since the ordering of the vehicles remain intact between $(l,k)$ and $(\tilde{i},\tilde{j})$. Thus, similar to the vehicle $(l,k)$ we also define 
\begin{align}\label{eq:entry_exitnon-conflict}
t_{\tilde{i},\tilde{j}}^m=\psi^m_{\tilde{i},\tilde{j}}(t_{i,j}^{0},\mathbf{b})\nonumber\\
t_{l\tilde{i},\tilde{j}}^{f}=\psi^f_{\tilde{i},\tilde{j}}(t_{i,j}^{0},\mathbf{b})
\end{align}
Similarly, for the velocity and position profile
\begin{align}\label{eq:vel_nonconflict}
(v_{\tilde{i},\tilde{j}}(t),p_{\tilde{i},\tilde{j}}(t))=\zeta(t_{i,j}^{0},\mathbf{b})
\end{align}

\subsubsection{Energy Cost}
For any vehicle the energy cost is defined as the area under the curve of the magnitude of the change in velocity over the time the vehicle spends in the system, i.e., $\int_{t_{i,j}^{0}}^{t_{i,j}^f}u_{i,j}(t)^2dt$.
We denote the energy cost as $C_{l,k}(\mathbf{b})$ which explicitly depends on the decisions $\mathbf{b}$ since the vehicle dynamics, and $t_{i,j}^f$ are functions of $\mathbf{b}$. The cost inherently depends on the position and velocity profile of the vehicles. 

We will compute the energy cost because of newly arrived vehicle $(i,j)$. Note that all the vehicles $C_{m,j}(\mathbf{b})$ where $m<i$ since these vehicles' profiles are unchanged because of the newly entered vehicles. Similarly, if $b_{i,j,l,k}=1$, the profile of vehicle $(l,k)$ is unchanged because of the newly entered vehicle. This is the external cost incurred because of the newly entered vehicle $(i,j)$, also known as marginal cost a term which is popular in Economics. Even a vehicle $(i^{\prime},j^{\prime})$ which belongs to the non-conflicting lane of lane $j$, its dynamics can be changed because of the decision $b_{i,j,l,k}$. For example, if $b_{i,j,l,k}=0$, then the profile of the vehicle $(l,k)$ will be modified and if $b_{i^{\prime},j^{\prime},l,k}=1$ then, the velocity profile of the vehicle $(i^{\prime},j^{\prime})$ will be modified. 

We denote the change in the energy cost because of the newly entered vehicle and the decision $b_{i,j,l,k}$ as $C_{i^{\prime},j^{\prime}}(\mathbf{b})-C_{i^{\prime},j^{\prime}}(\mathbf{b}_{-{(i,j)}})$ where we denote $C_{i,j}(\mathbf{b}_{-{(i,j)}})$ as $0$ since the vehicle $(i,j)$ is the newly entered vehicle. Again, for the vehicles $(m,j)$ where $m<i$, this change in the cost is $0$ since we are not changing the order of those vehicles. Further, if $b_{i,j,l,k}=1$, then $C_{l,k}(\mathbf{b})-C_{l,k}(\mathbf{b}_{-{(i,j)}})=0$. 

\subsubsection{Formulation of the Optimization Problem}
We are now equipped with all the notations to describe the objective. We consider four objectives--
\begin{itemize}
\item Minimize the additional energy cost of the vehicles incurred by the vehicle $(i,j)$.
\begin{align}\label{eq:o1}
o_1=\sum_{j^{\prime}}\sum_{i^{\prime}\in \mathcal{M}_{j^{\prime}}}C_{i^{\prime},j^{\prime}}(\mathbf{b})-C_{i^{\prime},j^{\prime}}(\mathbf{b}_{-{(i,j)}})
\end{align}
\item Minimize the additional time taken by the vehicles to exit the system because of the vehicle $(i,j)$
\begin{align}\label{eq:o2}
o_2=\sum_{j^{\prime}}\sum_{i^{\prime}\in \mathcal{M}_{j^{\prime}}}t_{i^{\prime},j^{\prime}}^f(\mathbf{b})-t_{i,j}^f(\mathbf{b}_{-(i,j)})
\end{align}
\item The difference between the maximum speed and the  velocity of the vehicles at the intersection must be minimized
\begin{align}\label{eq:o3}
o_3=\sum_{j^{\prime}}\sum_{i^{\prime}}(v_m-v_{i^{\prime},j^{\prime}}(t_{i^{\prime},j^{\prime}}(\mathbf{b})))^{+}\nonumber\\-(v
_m-v_{i^{\prime},j^{\prime}}(t_{i^{\prime},j^{\prime}}(\mathbf{b}_{-(i,j)})))^{+}
\end{align}
\item The final time where all the vehicles would exit the intersection must be minimized. 
\begin{align}\label{eq:t_final}
o_4=t^{final}=\max_{i^{\prime},j^{\prime}}t_{i^{\prime},j^{\prime}}(\mathbf{b})-\max_{i^{\prime},j^{\prime}}t_{i^{\prime},j^{\prime}}^f(\mathbf{b}_{-(i,j)})
\end{align}
\end{itemize}
The first objective seeks to minimize the total energy cost incurred due to the latest scheduling change because of the new vehicle $(i,j)$. The second objective seeks to minimize time the vehicles take to exit the system. The third objective seeks to maintain the velocities of the vehicles close to the maximum speed limit when they enter the intersection. This will ensure that it maximizes the throughput. Final objective seeks to minimize the total time it takes for all the vehicles to exit the intersection. Since $\mathbf{b}_{-(i,j)}$ does not contain $b_{i,j,l,k}$, one can equivalently eliminate the terms which does not depend on $b_{i,j,l,k}$ and can represent everything as a function of $b_{i,j,l,k}$.

The optimization problem that the traffic controller solves is the following:
\begin{align}\label{eq:opti}
\mathcal{P}:\text{min }& o_1+\lambda_1o_2+\lambda_2o_3+\lambda_3o_4\nonumber\\
\text{subject to }& (\ref{eq:entry_exit}), (\ref{eq:vel}),  (\ref{eq:entry_exitnon-conflict}), (\ref{eq:vel_nonconflict})\nonumber\\
& b_{i,j,l,k}\in \{0,1\}
\end{align}
The decision variable is $b_{i,j,l,k}$. The objectives are defined in (\ref{eq:o1})-(\ref{eq:t_final}). Note that for some sequence, if there does not exist any feasible solution (as of Theorem 1), then, corresponding to the vehicle $(i^{\prime},j^{\prime})$, the $t_{i^{\prime},j^{\prime}}^{m}$ is $\infty$ and such a sequence will never be selected. Note that there always exists at least one feasible sequence where the vehicle $(i,j)$ will enter the intersection at the end. Hence, solution of the optimization problem is well-defined. 
\end{color}

\section{Traffic Controller's Decision Algorithm}
In this section, we discuss the algorithm for the traffic controller to determine the sequence when a new vehicle arrives in the control zone.  Subsequently, we demonstrate the algorithm using an illustrative example.

First, we introduce some notations.
\begin{definition}
Let $N_k$ be the number of vehicles at lane $k$.
\end{definition}
Hence, $N_k=|\mathcal{L}_k|$. 

Recall that $Seq(t)$ represents the order in which the vehicles present at time $t$ will enter the intersection. Let $S_{i,j}$ be the order in which the vehicles after the vehicle $(i,j)$ enters the intersection before the vehicle $(i+1,j)$ enters the intersection. Note that $Seq(t_{i+1,j}^{0}-)$ is the sequence of the vehicles just before the vehicle $(i+1,j)$ enters the intersection.  If $Seq(t_{i+1,j}^{0}-)(m)=(i,j)$  then, $S_{i,j}(m_1)=Seq(t_{i+1,j}^{0}-)(m+m_1)$.  If two  vehicles enter the intersection at same time from the non-conflicting lanes, then the tie is broken randomly.  Further, if the vehicle $(i,j)$ is the last one which enters the intersection, then $S_{i,j}$ is empty. $S_{i,j}(n)$ denotes the $n$-th component of the set $S_{i,j}$. Note that since the newly entered vehicle $(i+1,j)$ does not change the order of the vehicles till the vehicle $(i,j)$ enters the intersection, hence, in essence, we only need to compute the order of vehicles within $S_{i,j}$.

For example, in Fig~\ref{fig:zone}, if the sequence is $\{v_3,v_2,v_4\}$, then after the vehicle $v_3$ enters the intersection , the order for the rest of the vehicles is $\{v_2,v_4\}$.  

\subsection{Algorithm}\label{sec:algo}
\begin{enumerate}[leftmargin=*]
\item If the number of vehicle in the system is $i$ at lane $j$, when a new vehicle arrives in lane $j$, the new vehicle is represented as $(i+1,j)$. If multiple vehicles arrive at the same time, tie is broken arbitrarily. 
\item Compute  $t_{i+1,j}^{\prime}=t_{i+1,j}^{0}+\dfrac{L+S}{v_{free}}$. Initialize $n=1$. 
\item If $S_{i,j}(n)$ is empty, then skip the next steps and go to step 7.
\item If $S_{i,j}(n)=(i^{\prime},j^{\prime})$ for any  $i^{\prime}$ where $j^{\prime}\notin \mathcal{M}_j$, then we go to step 6. 
\item If $S_{i,j}(n)=(l,k)$, then we consider a possible decision $b_{i+1,j,m,k}[n]=1$ for all $m\leq l-1$ and $b_{i+1,j,m,k}[n]=0$ for $m\geq l$. Let $l^{\prime}_{k_1}=\arg\min\{l|S_{i,j}(n^{\prime})=(l,k_1), n^{\prime}>n\}$
 where $k_1\neq k, k_1\in \mathcal{M}_j$. Thus, the vehicle $(l^{\prime}_{k_1},k_1)$ at lane $k_1$ will wait till the vehicle $(i+1,j)$ enters the intersection under this schedule. Hence, $b_{i,j,m^{\prime},k_1}[n]=1$ for $m^{\prime}\leq l^{\prime}_{k_1}-1$ for $k_1\in \mathcal{M}_j, k_1\neq k$ and $b_{i+1,j,m^{\prime},k_1}=0$ for $m^{\prime}\geq l^{\prime}_{k_1}$. The order of the vehicles at lane $j^{\prime}$ where lane $j^{\prime}$ is non-conflicting to lane $j$ can be obtained since the order $b_{i^{\prime},j^{\prime},l,k}$ $k\in \mathcal{M}_j^{\prime}$ remains the same as it was before the vehicle $(i,j)$ enters the control zone. Compute  $t_{i+1,j}^{m,temp}[n]=t_{i+1,j}^{0}+\dfrac{L}{v_{free}}, t_{i+1,j}^{f,temp}[n]=t_{i+1,j}^{\prime}$. The velocity profile $v_{i+1,j}^{temp}[n](t)$ and position profile $p_{i+1,j}^{temp}[n](t)$ are also computed . All the other vehicles' velocity and position profiles are also computed and are stored in $t_{i^{\prime},j^{\prime}}^{m,temp}[n]. t_{i^{\prime},j^{\prime}}^{f,temp}[n],v_{i^{\prime},j^{\prime}}^{temp}[n]$, and $p_{i^{\prime},j^{\prime}}^{temp}[n](t)$. Compute the objective value. 
 \item If $S_{i,j}(n)=(i^{\prime},j^{\prime})$ where $j^{\prime}\notin \mathcal{M}_j$, then $obj[n]=\infty$, otherwise  store the objective value in step 5 at  $obj[n]$. Make $n=n+1$. 
\item Let $z=\arg\min _n obj[n]$. Update $t_{l,k}^{m}=t_{l,k}^{m,temp}[z], t_{l,k}^{f}=t_{l,k}^{f,temp}[z], v_{l,k}(t)=v_{l,k}^{temp}[z](t), p_{l,k}(t)=p_{l,k}^{temp}[z](t)$ $\forall l\in \mathcal{N}_{j,k}\forall k\in \mathcal{M}_j$, $t_{i+1,j}^{m}=t_{i+1,j}^{m,temp}[z], t_{i+1,j}^{f}=t_{i+1,j}^{f,temp}[z]$, and $v_{i+1,j}(t)=v_{i+1,j}^{temp}[z](t)$, $p_{i+1,j}(t)=p_{i+1,j}^{temp}[z](t)$. The sequence order is updated as $b_{i+1,j,l,k}=b_{i+1,j,l,k}[z]$. 
\item The trajectory and dynamics of vehicles are updated. Update the sequence $Seq(t_{i+1,j}^0)$. If the vehicle $(i^{\prime},j^{\prime})$'s order is changed from $Seq(t_{i+1,j}^{0}-)$,  $j^{\prime}_{i^{\prime}}$ is different from $j^{\prime}_{i^{\prime}-1}$ and  $(i^{\prime},j^{\prime})$ is an AV, then the traffic-controller informs the updated time the AV can now enter the intersection.
\end{enumerate}
We obtain--
\begin{theorem}
The algorithm gives a solution of the optimization problem $\mathcal{P}$ in (\ref{eq:opti}). 
\end{theorem}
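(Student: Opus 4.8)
The plan is to show that the algorithm enumerates the whole feasible set of $\mathcal{P}$ (up to objective‑equivalence of candidates) and returns a global minimizer of (\ref{eq:opti}). The first step is to make precise the structure of that feasible set. By Assumption~\ref{assum_reduce} the only free variables are the new components $b_{i+1,j,l,k}$, $l\in\mathcal{L}_k$, $k\in\mathcal{M}_j$, with every other decision frozen at its value in $Seq(t_{i+1,j}^{0}-)$. Since no passing is allowed inside a lane, each conflicting lane $k$ already carries a fixed total order on its vehicles, so an assignment $\{b_{i+1,j,l,k}\}$ is feasible only if it is monotone in $l$ for each $k$ (if $(i+1,j)$ enters after $(l,k)$ it also enters after $(l-1,k)$); moreover the two conflicting‑lane thresholds are not independent, because the frozen sequence fixes one total order on all vehicles entering after $(i,j)$. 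Consequently every feasible $\mathbf{b}$ is realized by inserting $(i+1,j)$ into exactly one \emph{slot} of the tail $S_{i,j}$ of $Seq(t_{i+1,j}^{0}-)$, and conversely each insertion slot yields a feasible $\mathbf{b}$ (the ``insert after all vehicles'' slot always being feasible, which is what makes the optimum well defined). This is precisely the index $n$ ranged over in Steps~3--6, and Step~5 is the explicit translation of ``insert immediately before $S_{i,j}(n)$'' into the values $b_{i+1,j,m,k}[n]$ and $b_{i+1,j,m',k_1}[n]$ for both conflicting lanes, with the non‑conflicting lane order inherited unchanged.

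The second step is to verify that for each slot the algorithm computes the correct objective value. Once $\mathbf{b}$ is fixed, Section~\ref{sec:traffic_light} determines the red/green phases at every lane; then Theorem~\ref{thm:1} for AV lead vehicles, together with the lead/follower dynamics of Sections~\ref{sec:hdvlead} and \ref{sec:avfollow} and Remark~\ref{rmk:vehicledyn}, determines recursively and \emph{uniquely} the profiles $(v_{i',j'}(\cdot),p_{i',j'}(\cdot))=\zeta(t_{i,j}^{0},\mathbf{b})$ and the times $t_{i',j'}^{m}=\psi^m_{i',j'}(t_{i,j}^{0},\mathbf{b})$, $t_{i',j'}^{f}=\psi^f_{i',j'}(t_{i,j}^{0},\mathbf{b})$ of every vehicle still in the system, including the vehicles of the non‑conflicting lane (cf. (\ref{eq:entry_exitnon-conflict})--(\ref{eq:vel_nonconflict})). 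Only the followers of $(i,j)$ and the vehicles of the conflicting and non‑conflicting lanes can change, so recomputing exactly those profiles — what Step~5 does — is enough to evaluate $o_1,\dots,o_4$; the terms depending only on $\mathbf{b}_{-(i+1,j)}$ in (\ref{eq:o1})--(\ref{eq:t_final}) are constants and do not affect the argmin. If along the way an AV lead‑vehicle subproblem (\ref{eq:min}) is infeasible, or an HDV would require a deceleration below $u_{min}$ (cf. (\ref{eq:lowerumin})), then the corresponding $t^{m}=\infty$ by Definition~\ref{defn:assign}, which propagates to $obj[n]=\infty$; these are exactly the $\mathbf{b}$ infeasible for $\mathcal{P}$, so discarding them loses nothing.

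The third step is the bookkeeping for redundant slots. A slot $n$ with $S_{i,j}(n)=(i',j')$, $j'\notin\mathcal{M}_j$, produces the \emph{same} values $b_{i+1,j,\cdot,\cdot}$ as the first conflicting‑lane slot after it (or the ``insert at the end'' slot), hence the same feasible $\mathbf{b}$ and the same objective, which is already recorded; setting $obj[n]=\infty$ in Step~6 therefore removes no attainable objective value while avoiding double counting. Finally Step~7 picks $z=\arg\min_n obj[n]$ and Steps~7--9 install $\mathbf{b}=\mathbf{b}[z]$ and the matching profiles and entry/exit times. Combining the first two steps, $\{obj[n]:obj[n]<\infty\}$ is exactly the set of attainable objective values of $\mathcal{P}$ and it is nonempty, so $\mathbf{b}[z]$ is a global minimizer of (\ref{eq:opti}), which proves the claim.

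\textbf{Main obstacle.} The delicate part is exhaustiveness and soundness of the slot enumeration: one must argue rigorously that the correspondence ``feasible $\mathbf{b}$ of $\mathcal{P}$'' $\leftrightarrow$ ``insertion slot of $(i+1,j)$ in $S_{i,j}$'' is a well‑defined surjection onto the feasible set — in particular that the monotonicity and mutual‑consistency constraints imposed by the frozen total order admit no feasible configuration other than the listed slots, and that the limiting ``$(i+1,j)$ after all conflicting‑lane vehicles'' configuration is among the candidates evaluated (or reached when $S_{i,j}(n)$ is empty) — and that equivalent slots give identical objective values precisely because the recursive dynamics enter only through $\mathbf{b}$. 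Everything else follows directly from Theorem~\ref{thm:1}, Remark~\ref{rmk:vehicledyn}, Definition~\ref{defn:assign}, and the definitions of $\psi^m,\psi^f,\zeta$.
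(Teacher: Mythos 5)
Your proposal is correct and follows essentially the same route as the paper: the paper's own proof is the one-line observation that the algorithm enumerates all admissible insertion positions and returns the one with the smallest objective, hence is a minimizer of $\mathcal{P}$. You have simply filled in the details the paper leaves implicit — the correspondence between feasible $\mathbf{b}$ and insertion slots, the correctness of the recursive objective evaluation, and the treatment of infeasible and redundant slots — which strengthens rather than changes the argument.
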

\begin{proof}
Since the algorithm picks the sequence among all the possible combinations for which the objective is the smallest, thus, it is a solution of Optimization problem in (29)
\end{proof}

\subsection{Discussion}
When a vehicle enters the system, the traffic controller computes the current trajectory of the vehicle using the vehicle dynamics as discussed in Section~\ref{sec:system_model}. The algorithm tries to find the optimal $b_{i+1,j,l,k}$ for $l\in \mathcal{L}_k$, $k\in \mathcal{M}_j$. The relative order of the other vehicles is kept the same. The algorithm enumerates over all the possible sequence among the newly entered vehicle and the vehicles at the conflicting lanes in the original computed sequence after the vehicle $(i+1,j)$ enters the intersection. Hence, the traffic-controller only needs to find the optimal position of the vehicle $(i+1,j)$ among the vehicles in $S_{i,j}$. 

We select the first item of $S_{i,j}$. Suppose that it is $(l,k)$. Now, we compute the value of the objective function when the vehicle  $(l,k)$ needs to wait as the vehicle $(i,j)$ enters the intersection before vehicle $(l,k)$, i.e, $b_{i,j,l,k}[1]=0$. The rest of the order of $S_{i,j}$ remains the same.  Hence, if this is the decision made by the traffic-controller, then the sequence is $S^{new}(m)=S^{old}(m)$ for $m\leq n$ where $S^{old}(n)=(i,j)$, $S^{new}(n+1)=(i+1,j)$ and $S^{new}(m)=S^{old}(m-1)$ for $m>n+1$. The objective value, and the vehicle dynamics are  stored. Then, we pick the second element of $S_{i,j}$. Let the second element be $(l,k_1)$. Now we compute the objective value when $b_{i,j,l,k_1}[2]=0$. However, $b_{i,j,l,k}[2]=1$. If this is the optimal sequence chosen by the traffic-controller, then $S^{new}(m)=S^{old}(m)$ for $m\leq n+1$ where $S^{old}(n+1)=(l,k)$. $S^{new}(n+2)=(i,j)$ and $S^{new}(m)=S^{old}(m)$ for $m>n+2$.  We again store the objective values, and the dynamics. If $S_{i,j}(n)=(i^{\prime},j^{\prime})$ for any $i^{\prime}$, we do not take any action as we are only interested in scheduling of vehicles in the conflicting lanes. We continue this process until we exhaust  $S_{i,j}$. We then select the decision which corresponds to the lowest objective value.

We now illustrate the algorithm using an example. 
\begin{example}
Consider Fig.~\ref{fig:zone}. The vehicle $v_1$ has left the intersection. Let us assume that the current schedule of entering the intersection ($S^{old}$) be $\{v_3,v_2,v_4\}$ which means that $v_3$ would enter the intersection first, then $v_2$, and after that $v_4$ would enter the intersection. The corresponding estimated time to enter the intersection and leaving the intersection is computed by the traffic controller from the vehicle dynamics. Now, suppose vehicle $v_5$ enters the intersection at time $t_5^{0}$. Since the vehicle $v_5$ is the only vehicle in the lane, it would move at the speed limit. %

$v_3$ belongs to the lane which is non-conflicting to lane $3$. Thus, we do not select the order between $v_3$ and $v_5$ as they can enter the intersection without colliding.First according to our notation $v_5=(1,3)$, $v_2=(1,2)$, $v_4=(1,4)$, and $v_3=(1,1)$. We, thus, first consider the decision $b_{1,3,1,2}=0$, i.e, the vehicle $v_5$ will enter the intersection before the vehicle $v_2$ since $v_2$ enters the intersection before $v_4$, thus, $b_{1,3,1,4}=0$. The traffic-controller computes the objective for this decision. Now, we consider the decision $b_{1,3,1,4}=0$, however, $b_{1,3,1,2}=1$. We compute the value of the objective. Finally, we consider the decision $b_{1,3,1,4}=1$, $b_{1,3,1,2}=1$. Hence, $v_5$ will enter the intersection at the last position if this decision is implemented. We compute the value of the objective by computing the dynamics of the vehicles.

We see which combination gives the smallest value. That combination is implemented and the optimal sequence is updated. For example, if the decisions $b_{1,3,1,2}=1$ and $b_{1,3,1,4}=0$ give the smallest objective value. Then, the optimal sequence is $\{v_3,v_2,v_5,v_4\}$.

\end{example}
\begin{color}{black}
\section{Generalizations}
\subsection{Response time of the HDVs and Amber-light}\label{sec:amber}
There may be a non-zero reaction time of the HDVs when they see change in the traffic-light. We can easily accommodate the reaction time when the phase changes from the red to green  by modifying our model in the following manner. When the HDV $(i,j)$ faces green-light after the red-light, the HDV will respond after a reaction time of $t_{react}$. Thus, if the green-light is switched on at time $t^g$, then for time till $t^g+t_{react}$, the HDV will respond as if it is still facing the red-light, after the time $t^g+t_{react}$, the HDV will then follow the dynamics when it faces the green-light as described in (\ref{eq:dynh_acc}).  Hence, similar algorithm as described can be used to obtain the optimal sequence.

When the green-light is changed to red-light at a lane, in general, an amber-light is used before the transition. We can accommodate an amber-light duration in the following manner for the HDV. When the vehicle $(i,j)$ is at most $\delta$-distance away from entering the intersection, we assume that the vehicle $(i,j)$ will follow its dynamics as if it is facing a green-light and enter the intersection, i.e., it will treat the amber-light as the green-light. On the other hand when the vehicle $(i,j)$ is at least $\delta$-distance away from the intersection, the vehicle $(i,j)$ will not enter the intersection and stop similar to the scenario where it is facing the red-light. The distance $\delta$ is chosen as the distance that can be traversed before the vehicle $(i,j)$ can react and then applies the maximum deceleration possible (cf.(\ref{eq:lowerumin})) and still will be able to stop. Since when the HDV $(i,j)$ enters the intersection when it is at most $\delta$-distance away from the intersection when the amber-light is switched on, we consider that if $j_i(k)$ and $j_{i-1}(k)$ are not the same for some $k\in \mathcal{M}_j$, then the amber-light itself is switched on when the vehicle $(i-1,j)$ is $\delta$-distance away from the intersection as it would enforce that the vehicle $(i,j)$ will treat the amber-light as the red-light and the amber-light will transition to the red-light without affecting the schedule.\footnote{Still it may happen that a vehicle would enter the intersection if it is more than $\delta$-distance away, the traffic-intersection controller can sense it and put a red-light on all the intersection to allow the vehicle to cross the intersection. The traffic-intersection controller would remove the vehicle from the schedule and would update the schedule by retaining the relative order of other vehicles intact.}

\subsection{Deviation from the dynamics}\label{sec:deviation}
The HDVs'  parameters  may not be the same and may not be known to the traffic-controller. We now discuss how our model can accommodate the setting where the parameters of the HDVs' are drawn from a known and bounded distribution. In this case, the optimal sequence is obtained based on the expected values of the parameters. Note that once the sequence is decided, the traffic-controller also decides when to switch on the red or green-light at a lane based on the decision. However, since the parameter values may not coincide with the estimated values, the time at which the red-light and green-light are switched on will depend on the realized values and the decision $b_{i,j,l,k}$. 

Note that the traffic-controller provides the exact time when the AV can enter the intersection in the case it becomes the lead-vehicle and faces the red-light. When the parameters of the HDVs are drawn from a distribution, our model can be extended by considering the worst-case scenario. In particular, when the parameters of the HDVs follow a distribution, the time at which a HDV exits an intersection then can be represented by a probability distribution function for a given decision $\mathbf{b}$. The worst-case scenario is then the maximum value $\overline{t}_{l,k}^f$ of the support set of that distribution for a vehicle $(l,k)$. The traffic-intersection controller then computes $\overline{t}^g_{i,j}$ if $j_i(k)$ and $j_{i-1}(k)$ are different where $\overline{t}^g_{i,j}$ is computed similar to $t^g_{i,j}$ with $\overline{t}_{l,k}^f$ in place of $t_{l,k}^f$. $\overline{t}_{l,k}^f$ is the minimum time by which the vehicle $(l,k)$ would exit the intersection w.p. $1$. Hence, $\overline{t}^g_{i,j}$ represents the minimum time at which the vehicle $(j_i(k),k)$ and $(j_i(k_1),k_1)$ where $k,k_1\in \mathcal{M}_j$ would exit the intersection with probability $1$.

The traffic-intersection controller then informs the time $\overline{t}^g_{i,j}$ to the vehicle $(i,j)$  can enter the intersection which will ensure that AV will not collide with the vehicles at the intersection. Note that as the traffic-controller samples the vehicles' positions and velocities, those values can be updated. This approach is similar to {\em robust optimization approach} in Stochastic Optimization. Detailed numerical results are presented in Section~\label{sec:deviate}. Note that in the IDM model, it is easy to generate the worst-case scenario. For example,  the maximum possible value of $T_{i,j}$ (the head-way distance between two vehicles) will correspond to the worst-case scenario. 

Note that in the similar way, our model can accommodate the stopping of HDVs. One can again we can formulate a distribution of the time at which a HDV will enter and exit the intersection based on the stopping behavior of the HDVs. However, a complete characterization has been left for the future. 

Further, our model can also be extended to the scenario where the AV's dynamics are also different from the ones the traffic-controller estimates. As we discussed, the traffic-controller would decide the optimal sequence based on the estimated values of the parameters. The traffic-controller can also estimate the parameter for an AV depending on its behavior. Developing the complete framework is left for the future. 

In the rare scenarios where a vehicle may stop permanently because of a break-down, the vehicle is removed from the sequence order and the order of the rest of the vehicles are maintained as it is. When again the vehicle is again up for running, the vehicle order is updated by treating the vehicle as a new vehicle. 

\subsection{Variation of Maximum Speed}\label{sec:free_velocity}
Though we have considered in (\ref{eq:min}) that AV must enter the intersection at the maximum speed when it solves the optimal control problem. We can relax the assumption. For example, if there is a congestion at the neighboring intersection, the traffic controller may want to reduce the speed of the vehicles entering the intersection, then $v_m$ may not be the maximum speed, rather $v_m$ may be the desirable speed decided by the traffic-intersection controller. Note that the vehicles entering the intersection can also enter the intersection at a speed below the maximum speed because of the congestion. We, empirically, evaluate the impact of $v_m$ on the results in Section~\ref{sec:diff_speed}.

\subsection{Allowing Turning at the intersection} \label{sec:turn}
Our model can be extended to accommodate the scenario where vehicles can turn. We need to partition the set of lanes from where the vehicles can enter the intersection simultaneously, i.e., the set of non-conflicting lanes (for example, lanes $1$ and $3$ in Fig.~\ref{fig:zone}) depending on the nature of the lanes. We can use the conflict graph notion as described in \cite{hajbabaie} where the nodes are the lanes and edge exists between two nodes if they are conflicting to each other. A maximal independent set in a conflict graph represents a set of non-conflicting lanes. 

The safety constraint will be modified where the vehicles from the conflicting lanes can not enter the intersection while one vehicle is already in the intersection. This also characterizes the decision $b_{i,j,l,k}$ which schedules the vehicles from the conflicting lanes. We can characterize the dynamics of turning vehicles using the model proposed in \cite{hajbabaie}. Subsequently, the traffic controller can decide the optimal sequence among the vehicles in the conflicting lanes. The detailed characterization has been left for the future. 

\subsection{Difference between the optimal solution and solution of the relaxed problem}
In this section, we illustrate difference between the optimal schedule and the optimal schedule obtained by solving the relaxed problem as described in $\mathcal{P}$ (cf.(\ref{eq:opti})) using an example. Since in order to determine the optimal schedule, one needs to consider every possible combination, we consider a set-up with only $3$ vehicles.

Consider a two lane intersection which intersect with each other. In Fig.~\ref{fig:zone} consider the two lanes only lanes $1$ and $2$ there. $3$ vehicles arrive at lane $1$ at times $0,2$ and $7$-th seconds respectively. $3$ vehicles arrive at lanes $2$ at times $1$, $4$, and $15$-th seconds. The length $L=300$m and $S=100$m. Further $v_m=20$m/s. Hence, it would take $20$ seconds to cross the entire intersection at the maximum speed without any traffic.

We also assume that the third vehicle at lane $1$ (i.e., the vehicle which arrives at the $7$-th second) is AV and the first vehicle at lane $2$ (the vehicle arrives $1$-nd second) is the AV. The rest are HDV. We set $\lambda_1=0$, $\lambda_2=0.01$, $\lambda_3=0$, and $\lambda_4=10$. Thus, the scheduler puts more weight on the objective where the time the last vehicle would exit the intersection. The weights on the first and third objectives are $0$. 

The schedule attained by our proposed algorithm is $\{v_{1,1},v_{1,2},v_{2,1},v_{2,2},v_{2,3},v_{1,3}\}$. In our proposed algorithm, first the two vehicles arriving at the lane $1$ would enter the intersection even the second vehicle at lane $1$ enters after the first vehicle at lane $2$. Since the third vehicle at lane $1$ is an AV, thus, it can enter the intersection at the maximum speed while the vehicles at the lane $2$ can be allowed to enter the intersection even the third vehicle at lane $2$ arrives $8$-seconds later than the third vehicle at lane $1$. The final time where the last vehicle exits the intersection is at $40$-th seconds. 

On the other hand the optimal schedule is $\{v_{1,1},v_{1,2},v_{1,3},v_{2,1},v_{2,2},v_{2,3}\}$. Thus, first, the vehicles at lane $1$ enter the intersection and then the vehicles at lane $2$ enter the intersection. Note that the third vehicle at lane $2$ is a HDV and enters $8$-seconds after $v_{1,3}$. Hence, in the schedule proposed by our algorithm, the vehicle $v_{1,3}$ has to wait for the vehicle $v_{2,3}$ to exit the intersection. The final time at which the last vehicle in this sequence exits the intersection is at $36$-th second which is $4$-seconds faster compared to the one obtained by our algorithm. Note that in our proposed algorithm, only the scheduling involving the newly entered vehicle and the vehicle at the conflicting lanes are considered without changing the relative order of other vehicles. Thus, when the vehicle $v_{2,3}$ arrives then the optimal schedule between $v_{1,3}$ and $v_{2,3}$ are considered. Specifically, the schedule $\{v_{1,1},v_{1,2},v_{2,1},v_{2,2}\}$ were kept unchanged as the schedule till the time $v_{2,2}$ enters the intersection is kept unchanged. The two possible schedules are considered--i) $\{v_{1,1},v_{1,2},v_{2,1},v_{2,2},v_{2,3},v_{1,3}\}$ (corresponding to $b_{2,3,1,3}=0$) and ii) $\{v_{1,1},v_{1,2},v_{2,1},v_{2,2},v_{1,3},v_{2,3}\}$ (corresponding to $b_{2,3,1,3}=1$). The first combination yields the final time as $40$ seconds whereas the second combination yields the final time as $44$ seconds. Thus, our proposed algorithm chooses the combination i). However, in the optimal scheduling algorithm all the possible combinations are considered every time a new vehicle arrives. Thus, the schedule is different. However, the increase in final time is not significant yet out proposed algorithm is computationally easy. 

\end{color}
\section{Numerical Results}\label{sec:numerical}
\subsection{Set Up}

We consider an intersection with four lanes as depicted in Fig.~\ref{fig:zone}. We assume that $L=300m$, $S=100m$, $v_{min}=0$,and $v_{m}=20m/s$. $a_{max}=u_{max}=10 m/s^2$ and $a_{min}=u_{min}=-10 m/s^2$ are respectively the maximum and minimum acceleration of any type of vehicle. We also consider $a_h=2m/s^2$. Recall that $a_h$ is the acceleration when the lead vehicle is HDV and enters the intersection at a speed below the speed limit. The velocity of any vehicle can not exceed $v_m$. The distance $S$ is enough to attain the highest speed from the zero velocity if the vehicle accelerates uniformly with $a_h$.

When the vehicles follow another vehicle, we also need to set the parameters for the IDM. We assume that $T_{i,j}=2.5$-sec for the HDVs, and $T_{i,j}=1.5$-s for the AVs. We investigate the impact when the $T_{i,j}$ of a HDV varies randomly following a distribution in Section~\ref{sec:deviate}.  $s_0$ the safe steady distance between any two vehicles is assumed to be $2$-m. 

For the simulation, we discretize the velocity--
\begin{align}\label{vehadiscrete}
v_{i,j}(t+\Delta t)-v_{i,j}(t)= \dfrac{dv_{i,j}(t)}{dt}\Delta t \nonumber\\
p_{i,j}(t+\Delta t)-p_{i,j}(t)= v_{i,j}(t)\Delta t+\nonumber\\
 \dfrac{1}{2}(v_{i,j}(t+\Delta t)-v_{i,j}(t))^2\Delta t
\end{align}
where $\Delta t$ is the small time for discretization. We set $\Delta t=0.01$-sec. We also set $l=100m$ which is the largest distance from which a HDV would start decelerating when it encounters a red light. Note that the value of $\dfrac{dv_{i,j}(t)}{dt}$ is based on whether the vehicle is following another vehicle (IDM model, cf.(\ref{eq:IDM_follower_green})), or, it is the lead vehicle as in ((\ref{eq:lowerumin}), and (\ref{leadh})). For the AV, the value is either given by (\ref{eq:leadavdyn}) or by (\ref{eq:IDM_follower_green}). \begin{color}{black} We also consider that when the traffic-light transitions from red to green a HDV responds after 1-sec. \end{color} 

We consider that the vehicles arrive according to a Poisson process in each lane. The vehicles arrive with mean inter-arrival time of $10$ seconds. We run the simulation for 1 hour i.e., 3600 seconds. For each vehicle, with probability (w.p.) $p$  we assign it as a HDV and  w.p. $1-p$ we assign it as a AV. We compare our proposed algorithm with the FIFO algorithm where the vehicles which have entered first would cross the intersection first. We take average over $100$ simulations. \begin{color}{black}We compare our algorithm with respect to the FIFO algorithm\cite{dresner,cassandras1}. FIFO algorithm is the following-- it ranks the vehicles according the order they arrive at the control zone, thus, the vehicle who enters the control zone first is considered to enter the intersection first. The tie is broken in an arbitrary manner. \end{color}

\subsection{Results}
\subsubsection{Total time}
Fig.~\ref{fig:time} shows the variation of the  time at which the last vehicle would cross the intersection (i.e.$o_4$, cf.(\ref{eq:t_final})). This would measure the total time taken by all the vehicles to cross the intersection. Note that as the percentage of the AVs increases, $t^{final}$ decreases. Intuitively, as the percentage of the AVs increases the traffic controller can send signals when they can cross the intersection. Since the AVs can optimize the velocity profile accordingly, the total time taken by all the vehicles also reduce. 

Fig.~\ref{fig:time} also depicts that compared to the FIFO, our algorithm decreases the time significantly. The time reduction is higher when the percentage of the AV vehicles is small. Since in the FIFO, if the number of HDVs is large, the HDVs need to stop and wait for the vehicles in the conflicting lanes to cross before it can cross. The HDV then needs to accelerate and crosses the intersection at a slower speed compared to the AV. Hence, the total time taken in the FIFO becomes very large when the percentage of the AVs is small. Thus, our algorithm is most effective when the penetration rate of the AVs is small and our algorithm provides improvement of over 90\%. Even when the percentage of the AVs is 90\%, we obtain 50\% decrease in the total time taken by the vehicles to cross the intersection. 

Recall from the optimization problem in (\ref{eq:opti}) that $\lambda_3$ corresponds to the objective of the total time. Hence, as $\lambda_3$ increases compared to $\lambda_1$ and $\lambda_2$, our algorithm provides lower total time taken by all the vehicles to cross the intersection. The total number of vehicles which cross the intersection is significantly higher compared to the FIFO. Hence, our algorithm provides significantly better throughput compared to the FIFO.

\begin{figure}
\begin{center}
\includegraphics[width=0.3\textwidth]{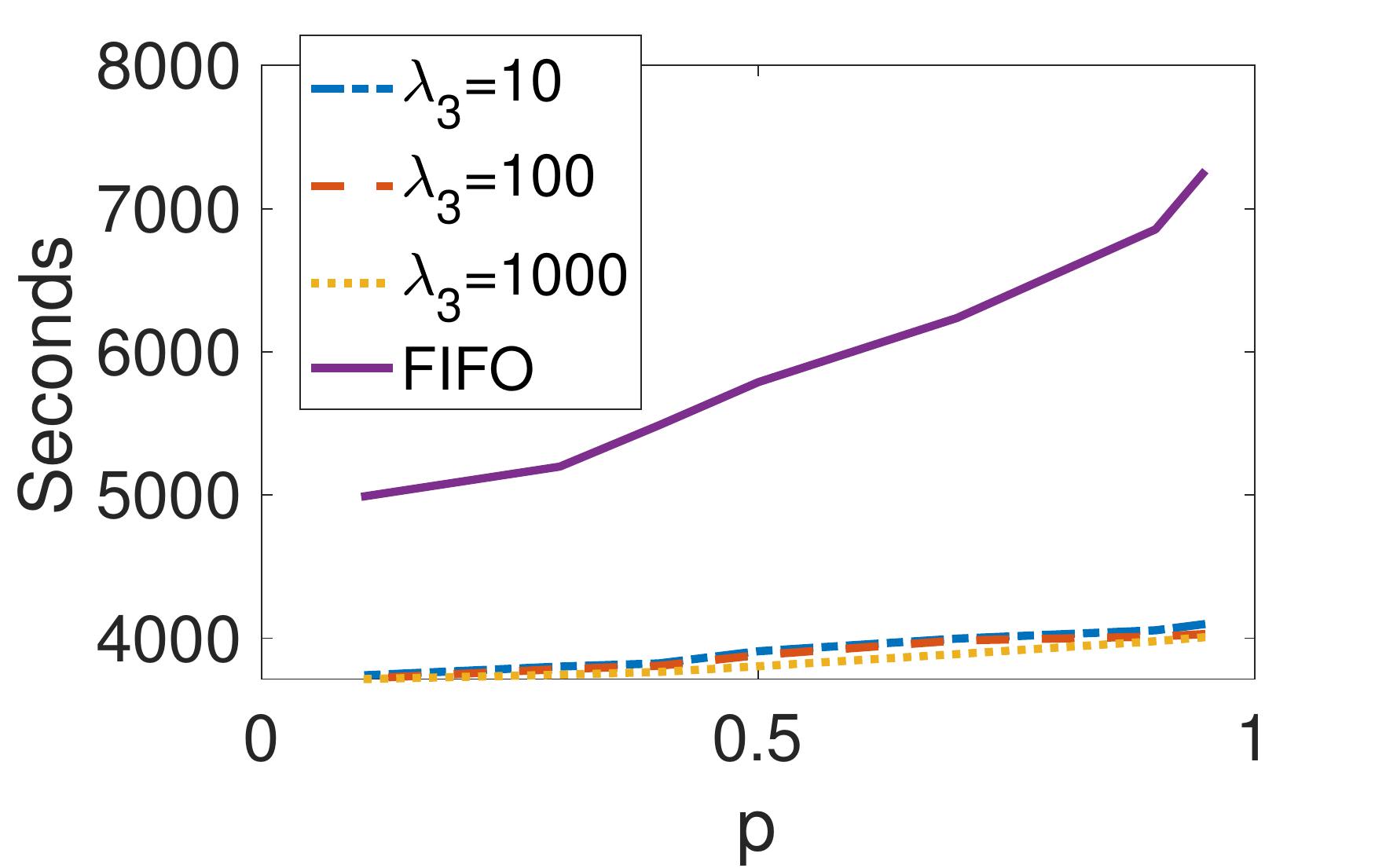}
\vspace{-0.1in}
\caption{The variation of time at which the last vehicle would cross the intersection with the AV penetration rate (i.e, $1-p$). $\lambda_1=100,\lambda_2=100$. }
\label{fig:time}
\end{center}
\vspace{-0.1in}
\end{figure}
\subsubsection{Distribution of the Delay}
Fig.~\ref{fig:delay} shows the average delay over all vehicles. Recall that the delay is 
$t_{i,j}^{f}-t_{i,j}^{\prime}$ (Definition~\ref{def:delay}).  As the percentage of the AVs increases, the delay decreases, since if the AVs are sent signals regarding the time they would cross the intersection, they would optimize the dynamics in order to enter the intersection at the speed limit which would reduce the delay for all the vehicles. 

\begin{figure}
\begin{center}
\vspace{-0.2in}
\includegraphics[width=0.3\textwidth]{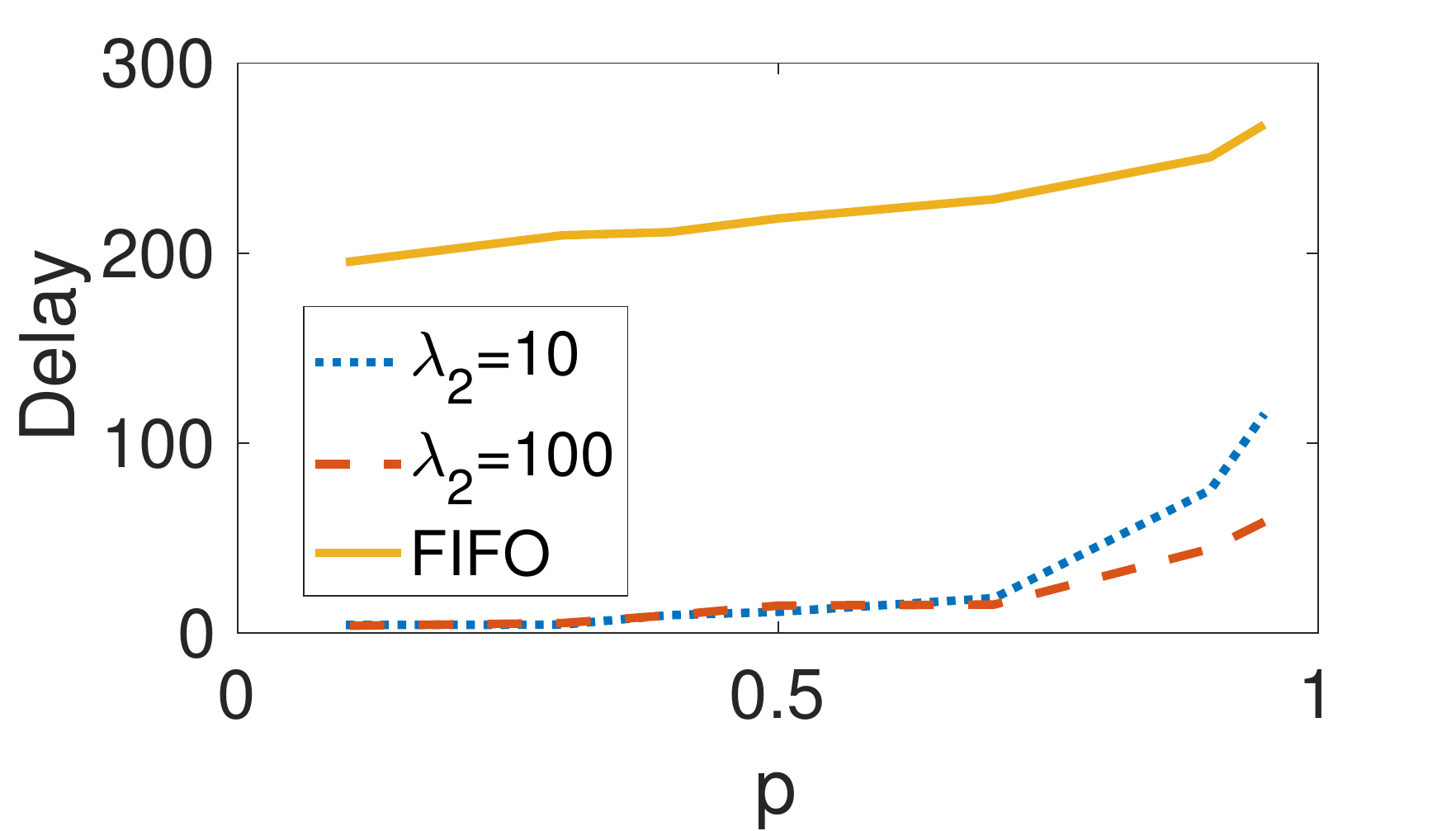}
\caption{The variation of the average delay over vehicles with the AV penetration rate (i.e, $1-p$). $\lambda_1=100,\lambda_2=100$. }
\label{fig:delay}
\end{center}
\vspace{-0.2in}
\end{figure}
Fig.~\ref{fig:delay} shows that compared to the FIFO, our algorithm gives significant improvement. For example, when the AV penetration rate is only 5\%,the average delay is around 500 seconds for FIFO whereas in our algorithm, it is only around 120 seconds for $\lambda_2=10$ and around 90 seconds when $\lambda_2=100$. Even when the AV penetration rate is 90\% the average delay in FIFO is around 195 seconds, where as in our algorithm is merely 5 seconds (close to negligible). Thus, our algorithm reduces the mean delay significantly compared to the FIFO system. 

Recall from the optimization problem in (\ref{eq:opti}) that $\lambda_2$ corresponds to the penalty for larger delay. Hence, as $\lambda_2$ increases compared to $\lambda_1$ and $\lambda_3$, the mean delay also decreases (Fig.~\ref{fig:delay}). 

Fig.~\ref{fig:delay2} shows that compared to the FIFO, the maximum delay among all the vehicles is significantly less in our algorithm. Maximum delay or the worst case delay measures the robustness of our algorithm. Note that as the percentage of the AVs increases the maximum delay  decreases since the AVs can optimize the dynamics and can follow closely compared to the HDVs.  On the other hand, the HDVs stop at the red signal and then again starts when the signal is green which results in greater delay. Fig.~\ref{fig:delay2} depicts that the worst case delay can be in the order of $3600$ seconds (i.e., $1$ hour) when the AV penetration rate is only 5\%. However, still in our algorithm it is of the order of 540 seconds ($9$ minutes). Fig.~\ref{fig:delay2} reveals that even when the AV penetration rate is 90\%, the worst case delay can be of the order of $1200$ seconds ($20$ minutes), however, in our algorithm, it is of the order of $100$ seconds (i.e., $2$ minutes). Thus, our algorithm significantly reduces the worst case delay compared to the FIFO. 

Similar to Fig.~\ref{fig:delay}, in Fig.~\ref{fig:delay2}, as $\lambda_2$ increases compared to $\lambda_1$ and $\lambda_3$, the worst case delay decreases. However, the decrement is not very significant.

\begin{figure}
\begin{center}
\vspace{-0.1in}
\includegraphics[width=0.3\textwidth]{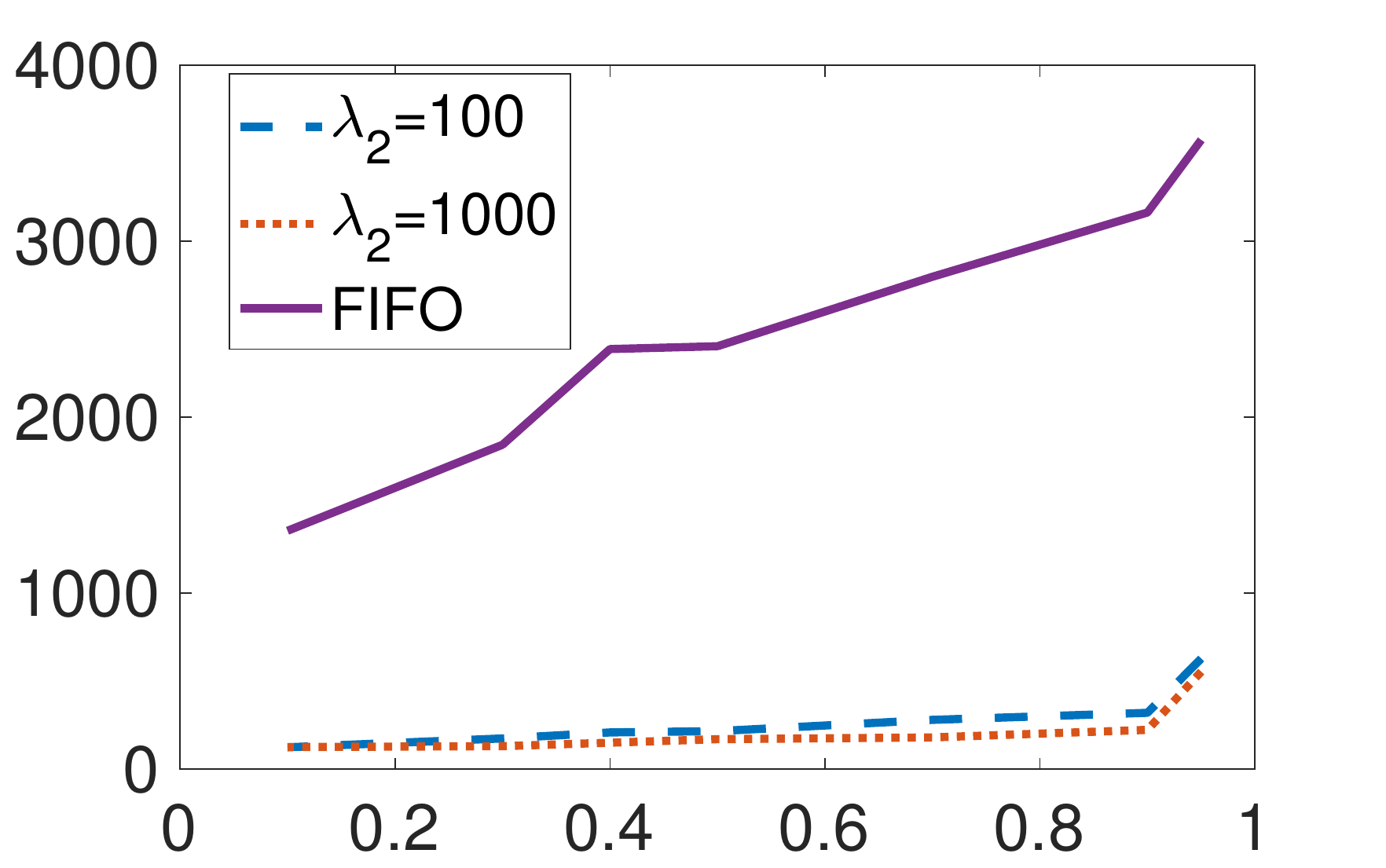}
\caption{The variation of the maximum delay of vehicles with the AV penetration rate (i.e, $1-p$). $\lambda_1=100,\lambda_2=100$. }
\label{fig:delay2}
\end{center}
\vspace{-0.25in}
\end{figure}

\subsubsection{Distribution of the Velocity while entering the Intersection}
Fig.~\ref{fig:vel2} depicts the average  velocities over the vehicles when they enter the intersection. Recall that the velocity at which the vehicle $i$ enters the intersection at lane $j$ is given by $v_{i,j}(t_{i,j}^{m})$. If the traffic controller sends information to the AVs, they can cross the intersection at the maximum speed. Hence, as the percentage of the AVs increases more vehicles cross the intersection at the full speed. Note that the vehicles following the lead vehicle try to maintain the same speed as that of the lead vehicles. Hence, if the lead vehicle crosses at the fullest speed, the lead vehicles also tend to do that which results in higher percentage of vehicles entering the intersection at the highest speed. Thus, as the AV penetration rate increases, the average velocity at which the vehicles enter the intersection increases. 

Fig.~\ref{fig:vel2} shows that the our algorithm ensures better mean velocity compared to the FIFO where the average is taken over all the vehicles. The performance of our algorithm is significantly better when the AV penetration rate is small. For example, when the AV penetration is only 5\%, the mean velocity in the FIFO is only around $1.5m/s$. However, our algorithm gives mean velocity around $5.5m/s$. As the percentage of the AVs increases, the mean velocity increases in FIFO as well,  still the mean velocity is significantly smaller compared to our algorithm. Recall from the optimization problem in (\ref{eq:opti}),  $\lambda_1$ corresponds to the weight regarding the penalty for velocity is not equal to the free speed limit. Hence, as $\lambda_1$ increases compared to $\lambda_2$ and $\lambda_3$, the mean velocity increases. 
\begin{figure}
\begin{center}
\includegraphics[width=0.3\textwidth]{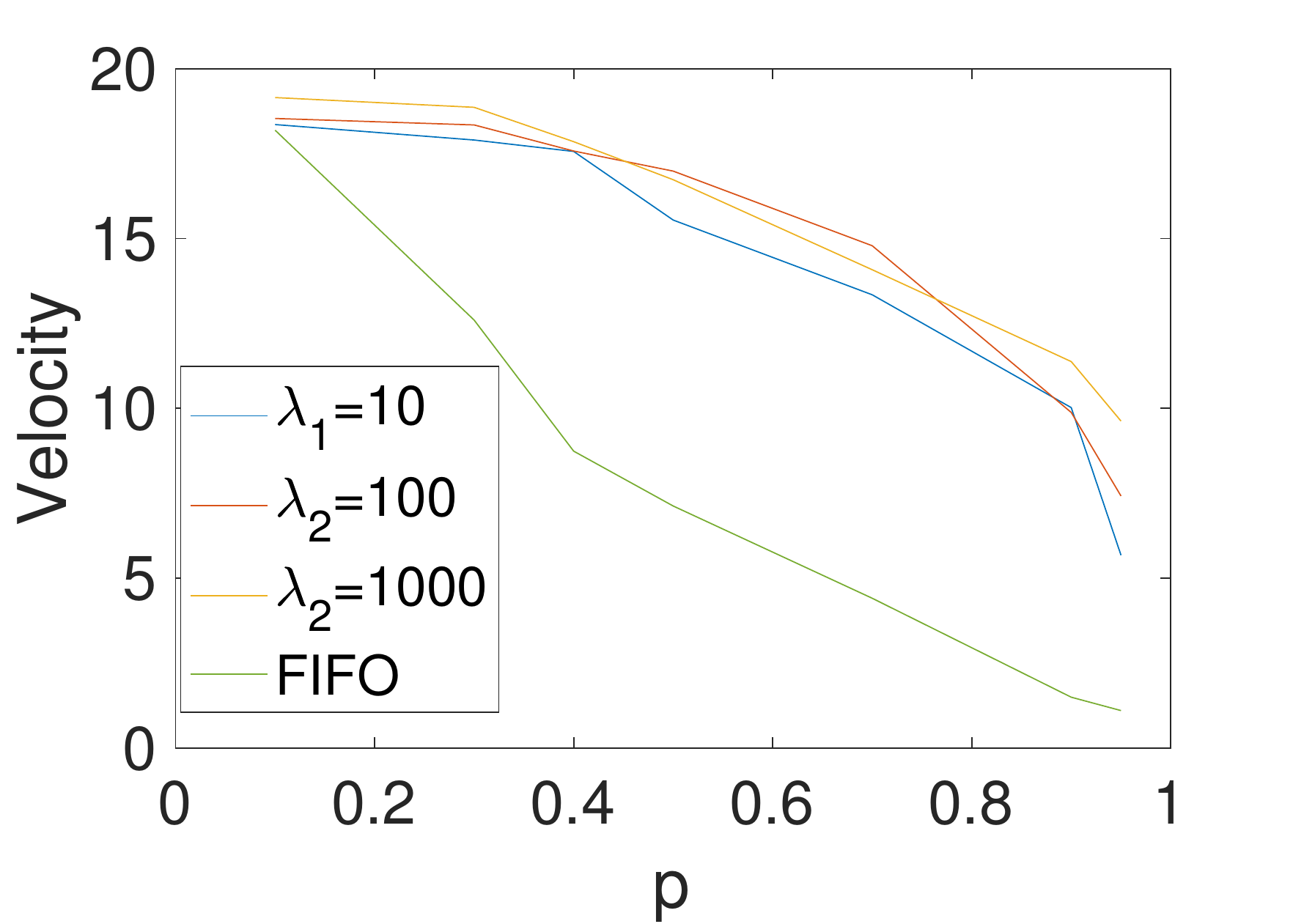}
\vspace{-0.1in}
\caption{The variation of the velocity of vehicles at which they enter the intersection zone with the AV penetration rate (i.e, $1-p$). $\lambda_1=100,\lambda_2=100$. }
\label{fig:vel2}
\end{center}
\vspace{-0.3in}
\end{figure}

\subsubsection{Acceleration and Deceleration}
Fig.~\ref{fig:acc} depicts that the probability mass functions of the absolute value of acceleration (or, retardation) over all the vehicles during the one hour window. Our result indicates that as the penetration of the AV increases (i.e. $p$ decreases), the vehicles tend to accelerate or retard less. Hence, the acceleration is more concentrated towards the value $0$ when the penetration of the AVs increases. Intuitively, the AVs would optimize its path when it is sent information pertaining when it can enter the intersection. Further, the AVs can adapt to the velocity at which the preceding vehicle is moving in a better way compared to the HDVs.  

\begin{figure}
\begin{center}
\includegraphics[width=0.3\textwidth]{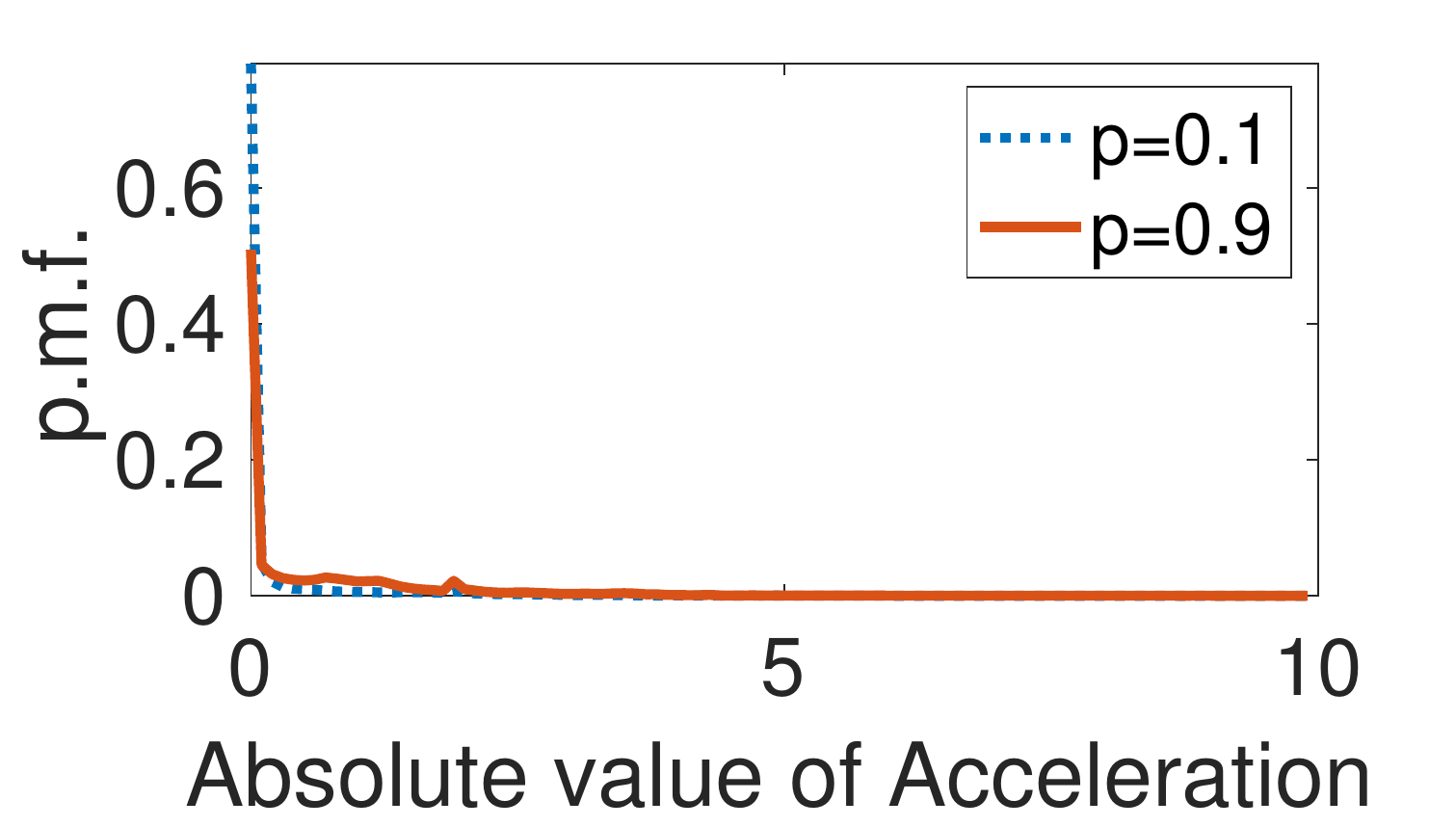}
\vspace{-0.1in}
\caption{The probability mass function of absolute value of acceleration or retardation as a function of $p$. }
\label{fig:acc}
\end{center}
\vspace{-0.3in}
\end{figure}

\subsubsection{Sensitivity Analysis}\label{sec:sensitivity}
Throughout the paper, we consider that the traffic controller predicts the dynamics of the vehicles when a new vehicle enters. In this section, we investigate the scenario where there is an error in prediction regarding the positions of the vehicles. Specifically, we add a random noise with the predicted position of the vehicles. The noise is halved for the AVs. We assume that the random noise is uniformly distributed within the range $[-r,r]$. The noise is also tuned in order to maintain a safe distance with the preceding vehicles. 

Fig.~\ref{fig:sensitive} indicates that even when the traffic controller has a noisy prediction, the performance of the algorithm does not vary much. For example, when the location uncertainty is around $40$m, the total travel time only increases by 5 seconds in 3600 seconds window. When the AV penetration rate increases, the impact is even less since the uncertainty pertaining the positions of the AVs are less compared to the HDVs. Further, as the uncertainty increases, the performance degrades, however, the degradation is negligible. Thus, our proposed algorithm can adapt effectively to the uncertainties in estimating the position of the vehicles. 

\begin{figure}
\begin{center}
\includegraphics[width=0.3\textwidth]{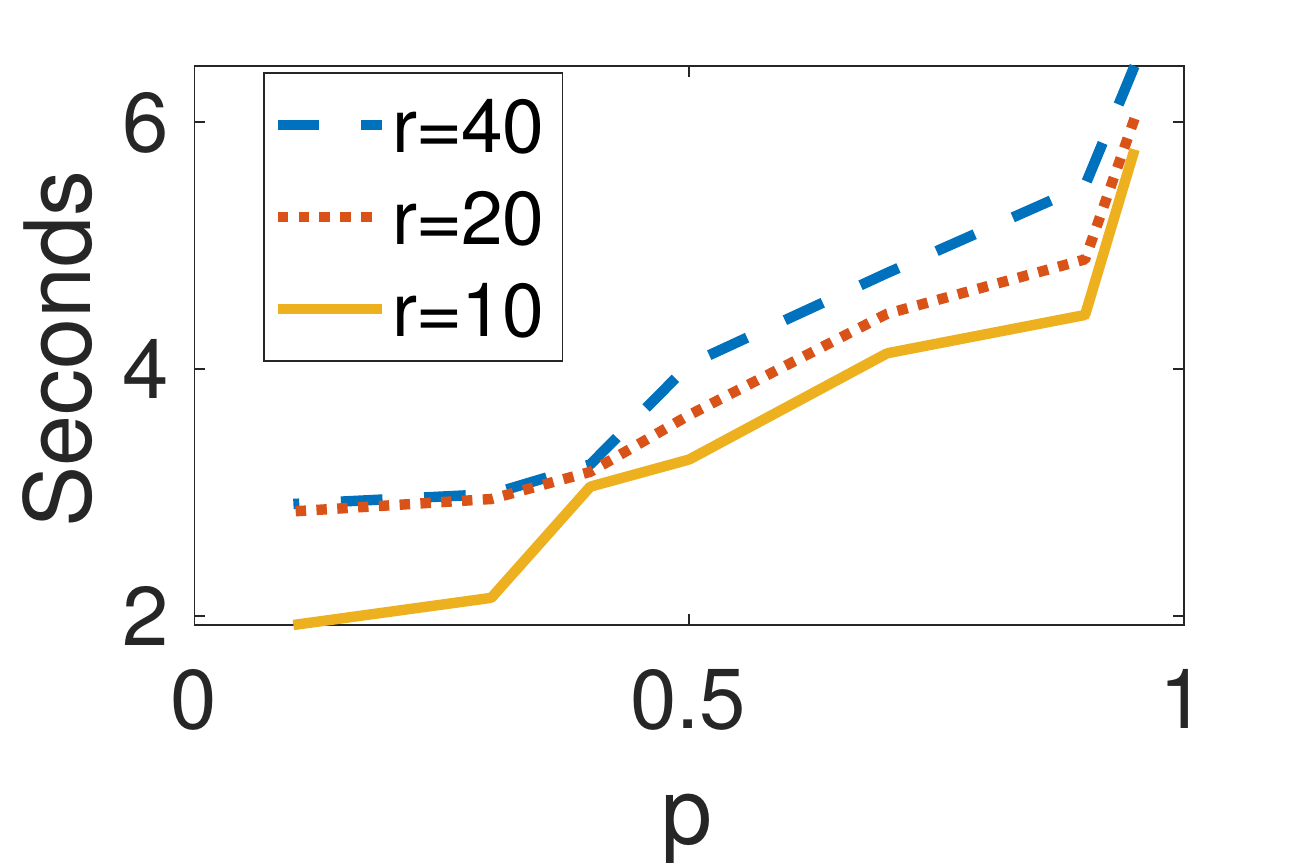}
\vspace{-0.1in}
\caption{The variation of the increase in total travel time as a function of $r$ and $p$.}
\label{fig:sensitive}
\end{center}
\vspace{-0.3in}
\end{figure}

\begin{color}{black}
\subsection{Deviation from the nominal value}\label{sec:deviate}
Note that so far, we assume that the traffic-intersection controller knows the parameter values of the HDVs. However, those values may not be known to the traffic intersection controller. Rather, the traffic intersection controller would know the maximum value of the parameter. In the following, we simulate where the parameter $T_{i,j}$ is sampled uniformly from a distribution $[2,r_1]$. 

The traffic-intersection controller informs the time to the AV where it is certain that there is no vehicle from the conflicting lanes. Specifically, the traffic-intersection controller computes the trajectories, the entry and exit times (eqns. (20)-(23)) by assuming that the head-way for the HDVs as $r_1$ seconds (the maximum value). Fig.~\ref{fig:worstdelay_r} and \ref{fig:meandelay_r} show that as $r_1$ increases the delay increases (both the worst-case and the mean value). The increase in delay is also prominent as the percentage of the HDV increases which is expected. Also note that when $r$ is high and the penetration of the AV is small, the number of HDVs with a higher head way distance increases, hence, the delay increases significantly. 

\subsection{Impact of the values of $v_{m}$}\label{sec:diff_speed}
We, empirically, show the impact of $v_m=v_{max}$. As $v_{max}$ decreases the delay increases. The rate of increase of delay increases as the maximum velocity $v_{max}$ decreases (Fig.~\ref{fig:delvsvmax}). 
 \begin{figure*}
     \centering
     \begin{minipage}{0.32\linewidth}
     \includegraphics[width=0.98\textwidth]{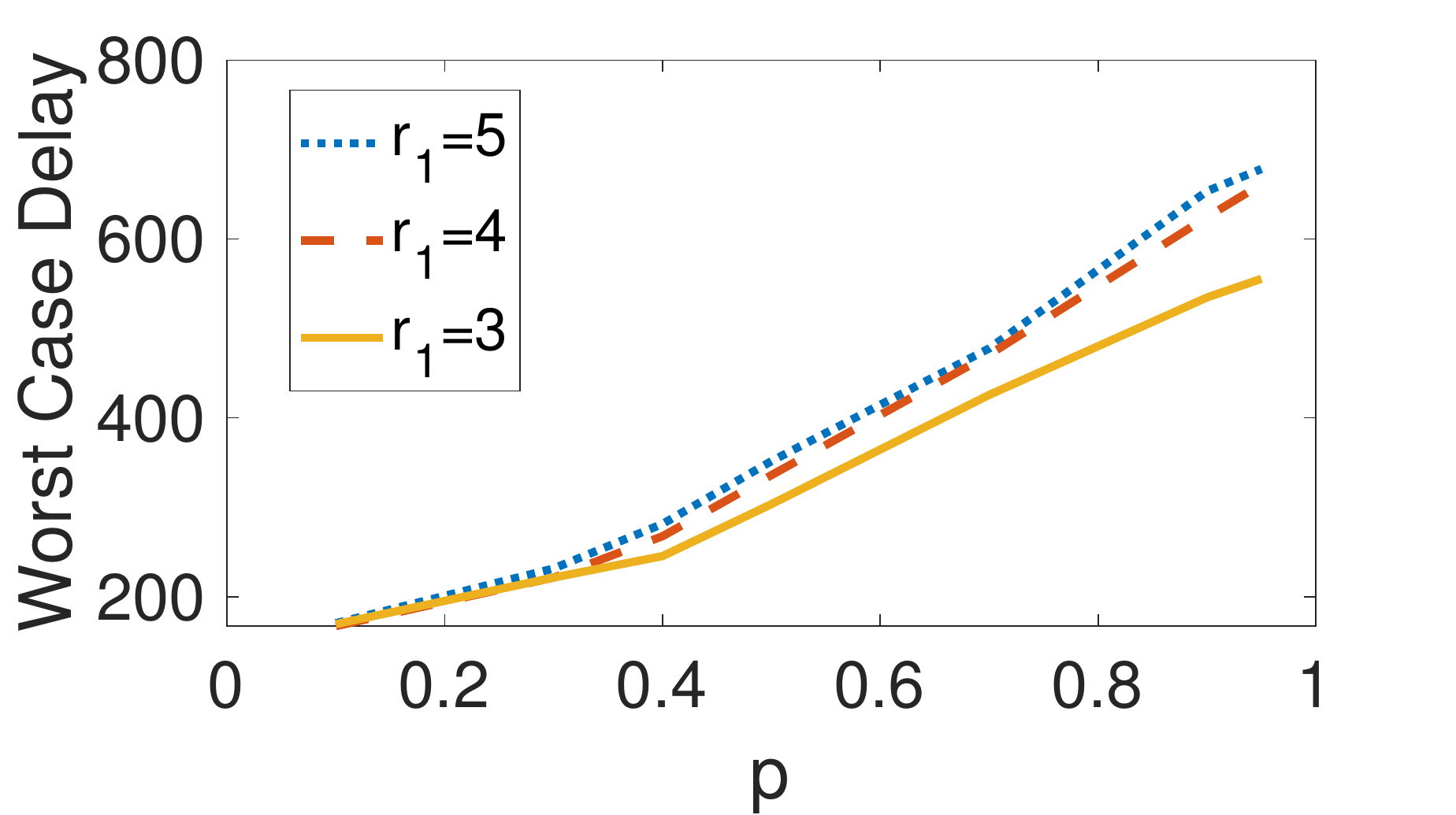}
     \caption{The variation of worst-case delay with respect to $r_1$, $T_{i,j}\sim \mathcal{U}[2,r_1]$ for HDVs.}
     \label{fig:worstdelay_r}
     \end{minipage}\hfill
     \begin{minipage}{0.32\linewidth}
      \includegraphics[width=0.98\textwidth]{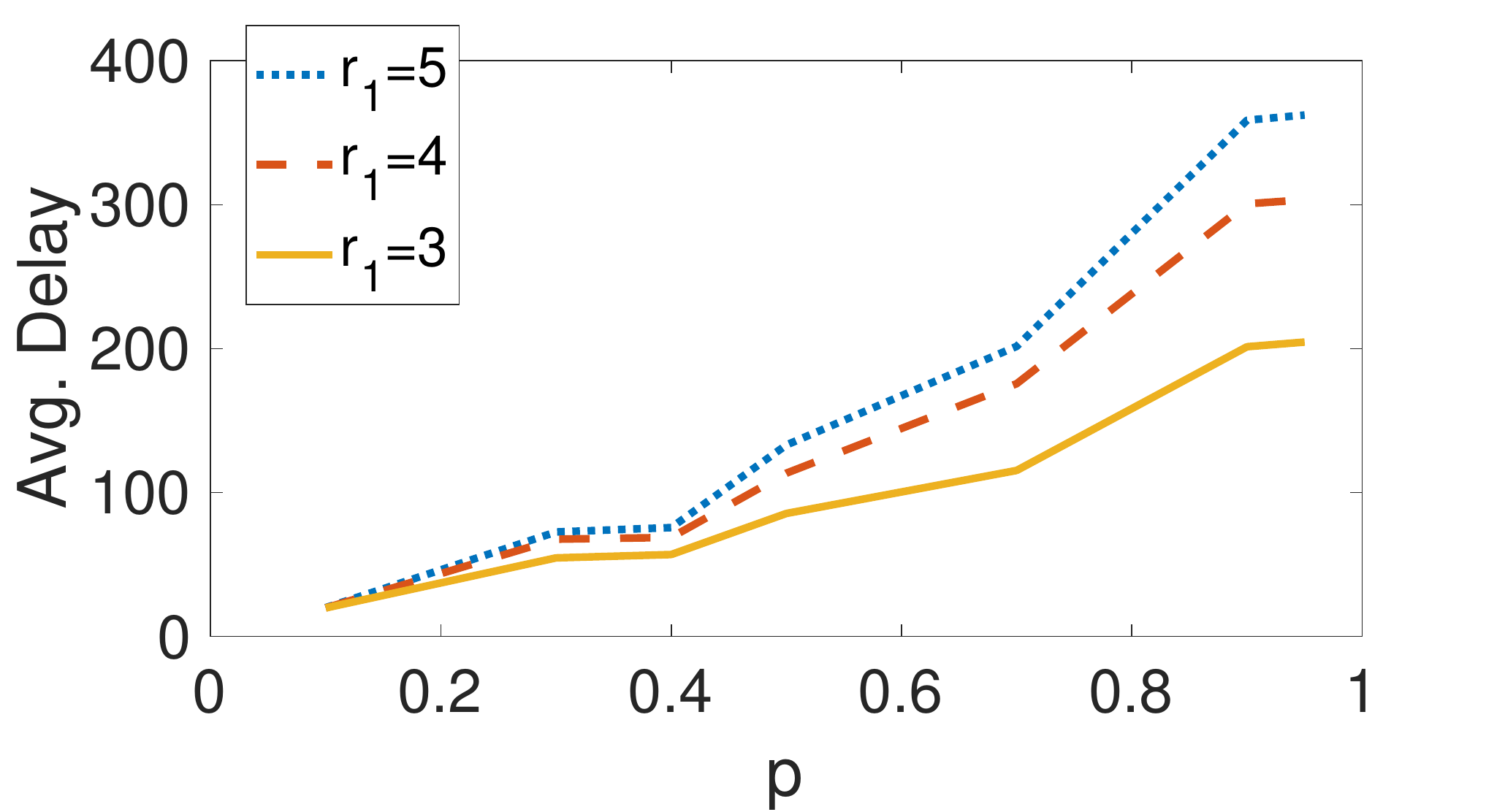}
      \caption{The variation of the average delay with $r_1$.}
      \label{fig:meandelay_r}
     \end{minipage}\hfill
     \begin{minipage}{0.32\linewidth}
      \includegraphics[width=0.98\textwidth]{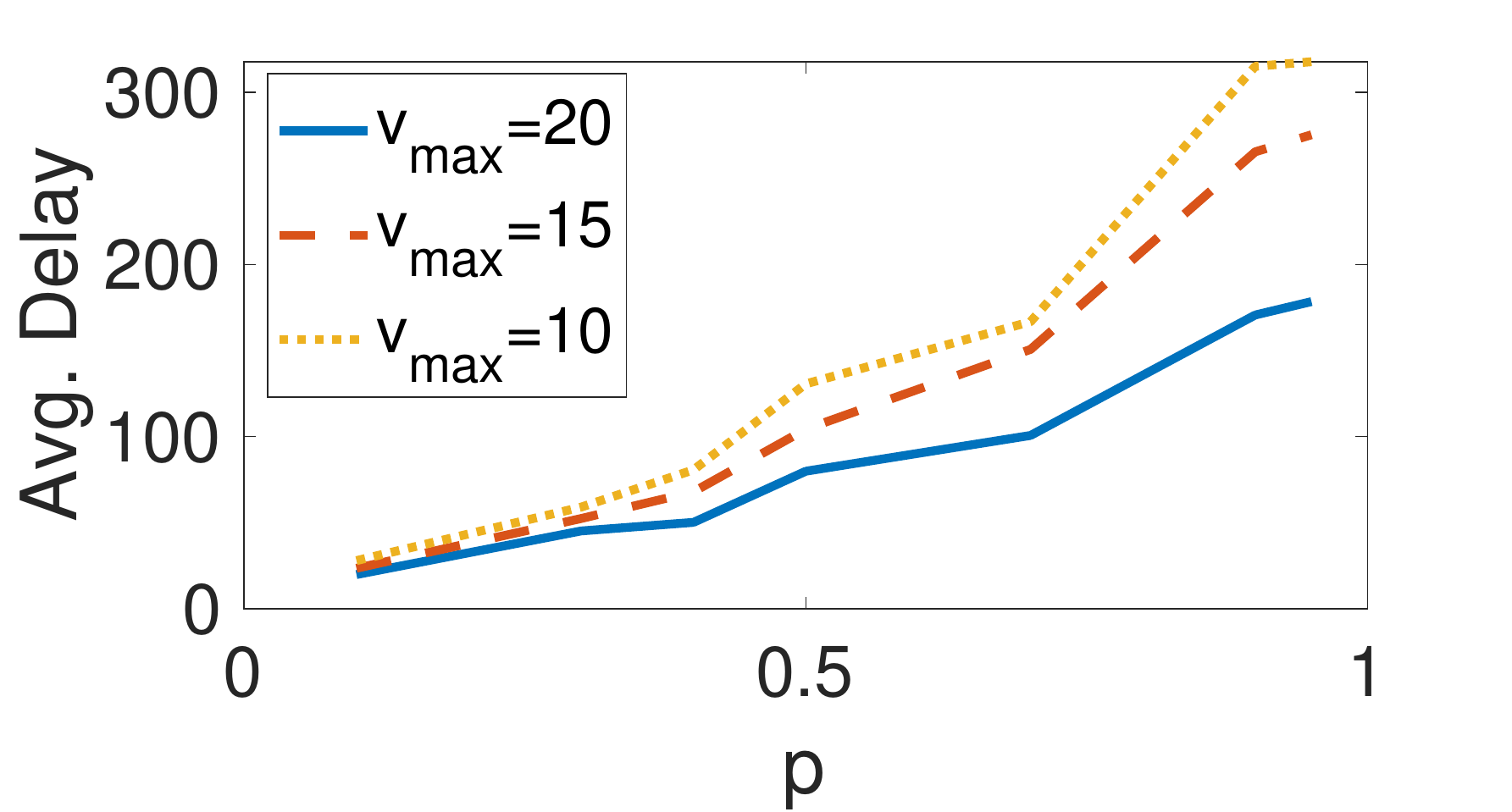}
      \caption{The variation of the average delay with $v_{max}$.}
      \label{fig:delvsvmax}
     \end{minipage}
 \end{figure*}
 \subsection{Velocity Profiles of vehicles}
 In Figs.~\ref{fig:vel_av} and \ref{fig:vel_hdv}, we plot the velocity profile of an AV and a HDV when the AV penetration rate is 50\%. When the AV becomes the lead-vehicle it faces a red-light, hence, the traffic-intersection controller informs the AV at what time it can enter the intersection. Note from Fig.~\ref{fig:vel_av} that the AV enters the intersection at the maximum speed and maintains that till it exits the intersection. 
 
 On the other hand, the traffic-intersection controller can not coordinate with the HDV. The HDV stops twice (Fig.~\ref{fig:vel_hdv}).. First, when the vehicle it is follwing has stopped; next, when the HDV is a lead-vehicle and it faces a red-light. Note that when the traffic-light is again green, it enters the intersection at the $0$ velocity unlike the AV. Hence, the HDV accelerates and reaches the maximum velocity before exiting the intersection. The delay is higher for the HDV compared to the AV. 
 \begin{figure}
     \centering
     \includegraphics[width=50mm]{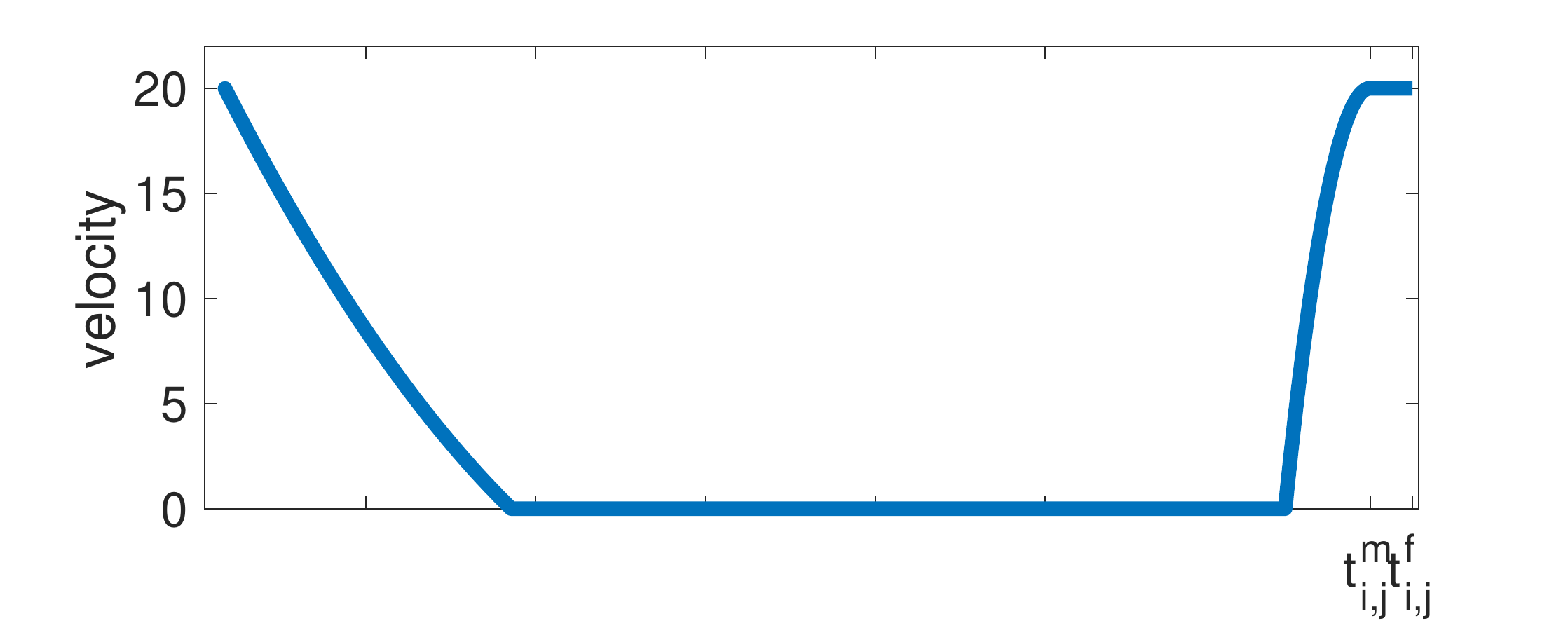}
     \caption{The variation of the velocity of an AV with time.}
     \label{fig:vel_av}
     \vspace{-0.2in}
 \end{figure}
\end{color}
\begin{figure}
    \centering
    \includegraphics[width=50mm]{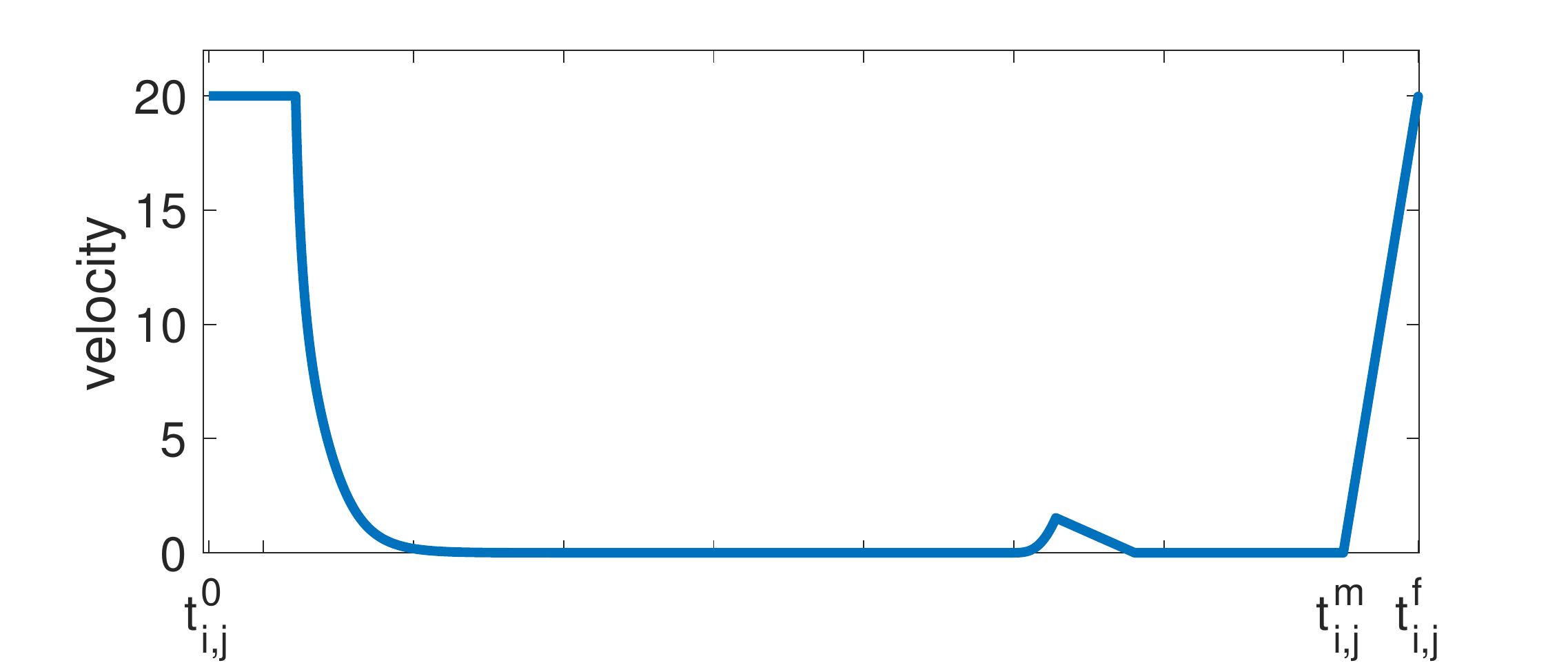}
    \caption{The variation of the velocity of the HDV with time.}
    \label{fig:vel_hdv}
    \vspace{-0.25in}
\end{figure}
\subsection{Arrangement of the HDVs and the AVs}
In order to analyze the impact of the arrangement of the HDVs and AVs, we simulate the arrivals of the vehicles in the following manner. We, first, consider that 50\% of all vehicles is AV. We divide the total incoming vehicles in a lane in the batch of $k$ vehicles. We consider that out of $k$ vehicles, first $k/2$ vehicles are AV and the rest $k/2$ vehicles are HDV.

 Our result shows that as $k$ increases, the total time the vehicles take to cross the intersection increases (Fig.~\ref{delay_permu}). Intuitively, when the HDVs become the lead vehicles they enter the intersection at a smaller speed compared to the AVs. Since the HDVs arrive later compared to the AVs, it takes more time to cross the intersection. 

Fig.~\ref{delay_permu} also shows that the second combination where first $k/2$ vehicles are HDVs and the rest of the $k/2$ vehicles are AVs gives smaller total time compared to the first scenario. Further, as $k$ increases the total time taken is also reduced. When the AV becomes the lead vehicle it can approach the intersection with a larger speed and thus, it takes smaller time to cross the intersection. Thus, the traffic controller can intentionally allow the HDV to cross the intersection by withholding the AVs. However, when $k$ increases too much the total time again increases since the HDVs need to be stopped as otherwise it would cause significant wait time to all the vehicles. Thus, the total time taken again increases when $k$ exceeds a threshold. 
\begin{figure}
\begin{center}
\includegraphics[width=0.3\textwidth]{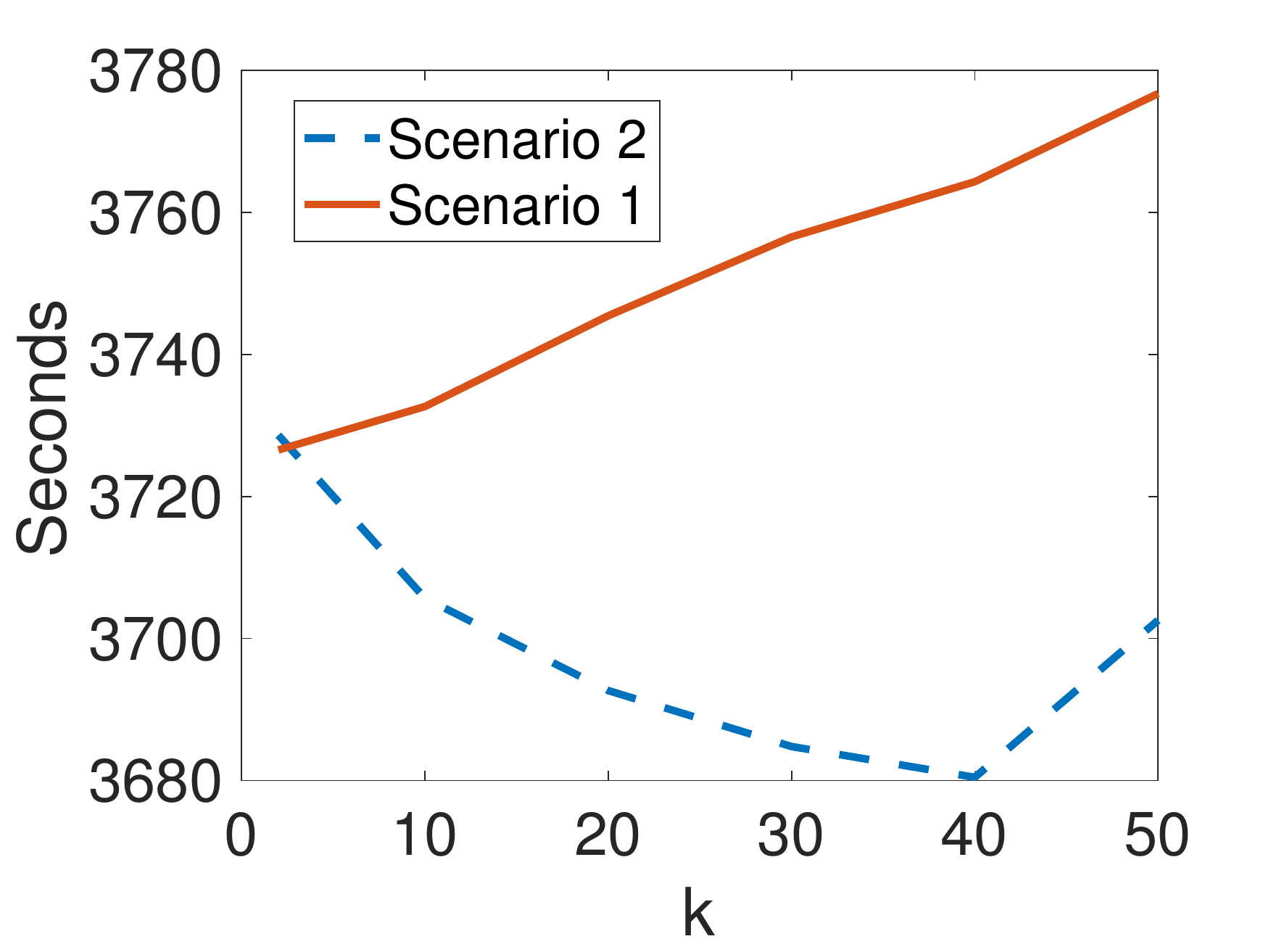}
\vspace{-0.1in}
\caption{The variation of the total time taken by the vehicles in 2 different scenarios as a function of $k$. $\lambda_1=100,\lambda_2=100,\lambda_3=100$. }
\label{delay_permu}
\end{center}
\vspace{-0.3in}
\end{figure}

Fig.~\ref{wait_permu} shows the maximum delay among all the vehicles as a function of $k$. Similar to Fig.~\ref{delay_permu} Scenario 2 reduces the maximum wait time of the vehicles. However, unlike in Fig.~\ref{delay_permu}, in Fig.~\ref{wait_permu} the wait time increases as $k$ increases. This is due to the fact that as $k$ increases, the number of HDVs in a slot becomes higher compared to the AVs. Since the HDVs take longer time to cross the intersection, the wait time of the vehicles become large. Hence, for better performance the HDV and AV should alternate. 
\begin{figure}
\begin{center}
\includegraphics[width=0.3\textwidth]{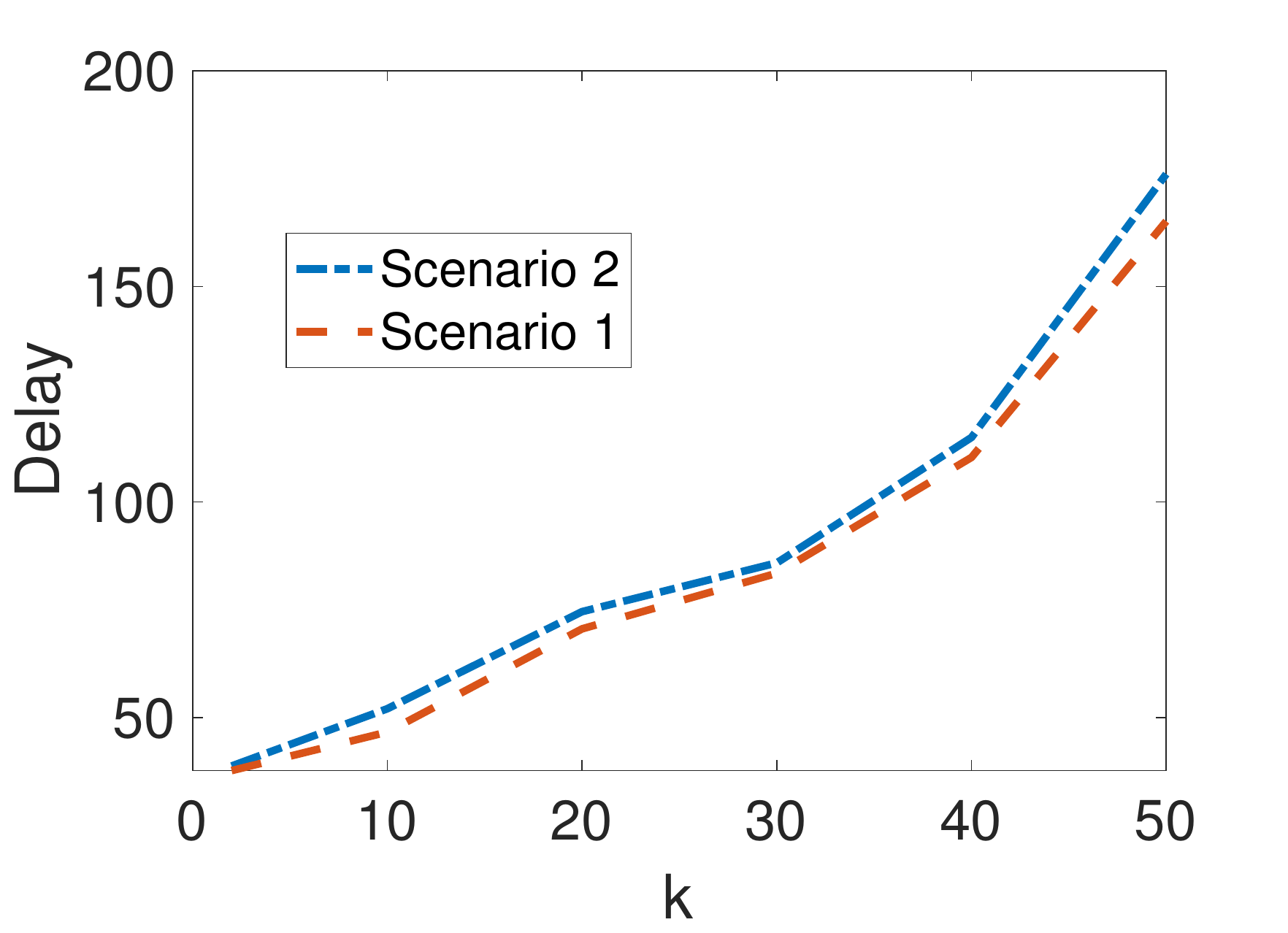}
\vspace{-0.05in}
\caption{The variation of the maximum wait time among the vehicles 2 different scenarios as a function of $k$. $\lambda_1=100,\lambda_2=100, \lambda_3=100$. }
\label{wait_permu}
\end{center}
\vspace{-0.3in}
\end{figure}

 Fig.~\ref{vel_permu} shows the mean velocity of the vehicles when they enter the intersection as a function of $k$. It reveals that scenario 2 gives slightly better mean velocity compared to the scenario 1. However, unlike in Fig.~\ref{delay_permu}, Fig.~\ref{vel_permu} reveals that as $k$ increases the mean velocity decreases in both the scenarios. Since when a large number of HDVs arrive in burst without any AV, it would decrease the mean speed. Hence in order to maintain high velocity the HDV and AV should alternate. 
\begin{figure}
\begin{center}
\includegraphics[width=0.3\textwidth]{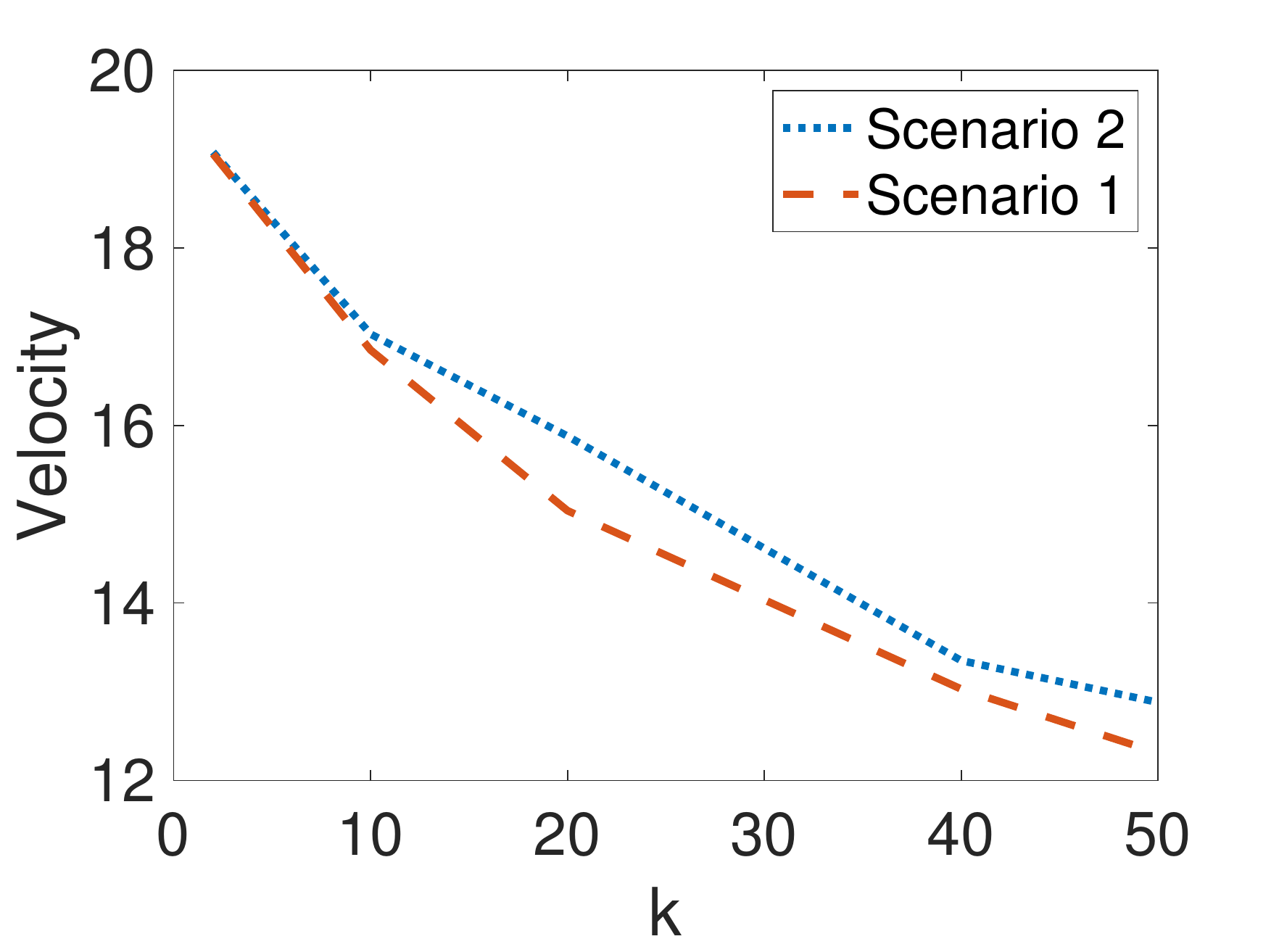}
\vspace{-0.1in}
\caption{The variation of the mean velocity of the vehicles in 2 different scenarios as a function of $k$. $\lambda_1=100,\lambda_2=100, \lambda_3=100$. }
\label{vel_permu}
\end{center}
\vspace{-0.3in}
\end{figure}

Thus, the arrangement of the HDVs and AVs can play a significant role in the total time taken by the vehicles, the average velocity, and the maximum delay among the vehicles. Scenario 2 where the AVs would follow the HDV would give a lower total time for crossing across the vehicles. However, there exists an optimal threshold for the size of the AVs to follow the HDV. On the other hand, when AV and HDV alternate the average velocity at which the vehicles enter is the maximum. The combination where the AV and HDV alternates also results in smaller delay across the vehicles. 

\section{Conclusions and Future Work}
We consider a scenario where the traffic controller  optimally decides the sequence at which vehicle enter an urban intersection in a mixed autonomy scenario. When a new vehicle enters the system, the traffic-controller computes the optimal sequence. Since the number of possible combinations increases exponentially with the number of vehicles, we consider a relaxed version where the sequence is computed only among the newly entered vehicle and the vehicles at the conflicting lanes with the relative order of vehicles among the others are kept the same. Thus, the possible combination scales only linearly with the number of vehicles. We formulate the optimization problem. We consider that the traffic-intersection controller can inform the AV the time at which it can enter the intersection if it enters as the lead vehicle or it will face a red-light once the preceding vehicle enters the intersection. Such an AV can enter the intersection at the maximum speed while adapting its dynamics. We characterize how the traffic-intersection controller selects the traffic-lights at each lane based on a scheduling order, and how such a decision impacts the dynamics of the vehicles. We propose an algorithm which determines the optimal sequence by evaluating the objective for each possible sequence. We also show that how our work can be extended when the HDV's parameters may deviate from the nominal values. We empirically evaluate our algorithm and show that our algorithm significantly outperforms the FIFO mechanism. 

\begin{color}{black}
Our work can be extended in numerous ways. The characterization of optimal algorithm where vehicles are allowed to  make turns is left for the future. We have considered only one intersection zone. The characterization of optimal algorithm considering multiple intersection zones and the traffic behavior across multiple intersection zones is another possible future research direction. Further, though, we empirically characterize the impact of the uncertainties in the predicting the position on the model, the complete theoretical analysis is left for the future. The characterization on other vehicle following models have been left for the future. Finally, more sophisticated model for fuel cost has also left for the future. \end{color}


\bibliographystyle{IEEEtran}
\bibliography{av}

\begin{thebibliography}{10}
\providecommand{\url}[1]{#1}
\csname url@samestyle\endcsname
\providecommand{\newblock}{\relax}
\providecommand{\bibinfo}[2]{#2}
\providecommand{\BIBentrySTDinterwordspacing}{\spaceskip=0pt\relax}
\providecommand{\BIBentryALTinterwordstretchfactor}{4}
\providecommand{\BIBentryALTinterwordspacing}{\spaceskip=\fontdimen2\font plus
\BIBentryALTinterwordstretchfactor\fontdimen3\font minus
  \fontdimen4\font\relax}
\providecommand{\BIBforeignlanguage}[2]{{%
\expandafter\ifx\csname l@#1\endcsname\relax
\typeout{** WARNING: IEEEtran.bst: No hyphenation pattern has been}%
\typeout{** loaded for the language `#1'. Using the pattern for}%
\typeout{** the default language instead.}%
\else
\language=\csname l@#1\endcsname
\fi
#2}}
\providecommand{\BIBdecl}{\relax}
\BIBdecl

\bibitem{dresner}
K.~{Dresner} and P.~{Stone}, ``Multiagent traffic management: a
  reservation-based intersection control mechanism,'' in \emph{Proceedings of
  the Third International Joint Conference on Autonomous Agents and Multiagent
  Systems, 2004. AAMAS 2004.}, 2004, pp. 530--537.

\bibitem{li}
L.~{Li} and F.~{Wang}, ``Cooperative driving at blind crossings using
  intervehicle communication,'' \emph{IEEE Transactions on Vehicular
  Technology}, vol.~55, no.~6, pp. 1712--1724, 2006.

\bibitem{yan}
F.~{Yan}, M.~{Dridi}, and A.~{El Moudni}, ``Autonomous vehicle sequencing
  algorithm at isolated intersections,'' in \emph{2009 12th International IEEE
  Conference on Intelligent Transportation Systems}, 2009, pp. 1--6.

\bibitem{zhu}
\BIBentryALTinterwordspacing
F.~Zhu and S.~V. Ukkusuri, ``A linear programming formulation for autonomous
  intersection control within a dynamic traffic assignment and connected
  vehicle environment,'' \emph{Transportation Research Part C: Emerging
  Technologies}, vol.~55, pp. 363 -- 378, 2015, engineering and Applied
  Sciences Optimization (OPT-i) - Professor Matthew G. Karlaftis Memorial
  Issue. [Online]. Available:
  \url{http://www.sciencedirect.com/science/article/pii/S0968090X1500008X}
\BIBentrySTDinterwordspacing

\bibitem{karaman}
D.~{Miculescu} and S.~{Karaman}, ``Polling-systems-based control of
  high-performance provably-safe autonomous intersections,'' in \emph{53rd IEEE
  Conference on Decision and Control}, 2014, pp. 1417--1423.

\bibitem{gilbert}
\BIBentryALTinterwordspacing
E.~G. Gilbert, ``Vehicle cruise: Improved fuel economy by periodic control,''
  \emph{Automatica}, vol.~12, no.~2, pp. 159 -- 166, 1976. [Online]. Available:
  \url{http://www.sciencedirect.com/science/article/pii/0005109876900790}
\BIBentrySTDinterwordspacing

\bibitem{hooker}
\BIBentryALTinterwordspacing
J.~Hooker, ``Optimal driving for single-vehicle fuel economy,''
  \emph{Transportation Research Part A: General}, vol.~22, no.~3, pp. 183 --
  201, 1988. [Online]. Available:
  \url{http://www.sciencedirect.com/science/article/pii/0191260788900362}
\BIBentrySTDinterwordspacing

\bibitem{LEVIN2017528}
\BIBentryALTinterwordspacing
M.~W. Levin and D.~Rey, ``Conflict-point formulation of intersection control
  for autonomous vehicles,'' \emph{Transportation Research Part C: Emerging
  Technologies}, vol.~85, pp. 528 -- 547, 2017. [Online]. Available:
  \url{http://www.sciencedirect.com/science/article/pii/S0968090X17302735}
\BIBentrySTDinterwordspacing

\bibitem{7963717}
S.~A. {Fayazi}, A.~{Vahidi}, and A.~{Luckow}, ``Optimal scheduling of
  autonomous vehicle arrivals at intelligent intersections via milp,'' in
  \emph{2017 American Control Conference (ACC)}, 2017, pp. 4920--4925.

\bibitem{8690681}
Q.~{Lu} and K.~{Kim}, ``A mixed integer programming approach for autonomous and
  connected intersection crossing traffic control,'' in \emph{2018 IEEE 88th
  Vehicular Technology Conference (VTC-Fall)}, 2018, pp. 1--6.

\bibitem{yu2018integrated}
C.~Yu, Y.~Feng, H.~X. Liu, W.~Ma, and X.~Yang, ``Integrated optimization of
  traffic signals and vehicle trajectories at isolated urban intersections,''
  \emph{Transportation Research Part B: Methodological}, vol. 112, pp. 89--112,
  2018.

\bibitem{rios-torres}
J.~{Rios-Torres} and A.~A. {Malikopoulos}, ``A survey on the coordination of
  connected and automated vehicles at intersections and merging at highway
  on-ramps,'' \emph{IEEE Transactions on Intelligent Transportation Systems},
  vol.~18, no.~5, pp. 1066--1077, 2017.

\bibitem{cassandras1}
Y.~J. {Zhang}, A.~A. {Malikopoulos}, and C.~G. {Cassandras}, ``Optimal control
  and coordination of connected and automated vehicles at urban traffic
  intersections,'' in \emph{2016 American Control Conference (ACC)}, 2016, pp.
  6227--6232.

\bibitem{cassandras2}
\BIBentryALTinterwordspacing
A.~A. Malikopoulos, C.~G. Cassandras, and Y.~J. Zhang, ``A decentralized
  energy-optimal control framework for connected automated vehicles at
  signal-free intersections,'' \emph{Automatica}, vol.~93, pp. 244 -- 256,
  2018. [Online]. Available:
  \url{http://www.sciencedirect.com/science/article/pii/S0005109818301511}
\BIBentrySTDinterwordspacing

\bibitem{cassandras3}
Y.~{Zhang} and C.~G. {Cassandras}, ``A decentralized optimal control framework
  for connected automated vehicles at urban intersections with dynamic
  resequencing,'' in \emph{2018 IEEE Conference on Decision and Control (CDC)},
  2018, pp. 217--222.

\bibitem{zohdy}
I.~H. {Zohdy}, R.~K. {Kamalanathsharma}, and H.~{Rakha}, ``Intersection
  management for autonomous vehicles using icacc,'' in \emph{2012 15th
  International IEEE Conference on Intelligent Transportation Systems}, 2012,
  pp. 1109--1114.

\bibitem{zhang2020virtual}
Z.~Zhang, F.~Liu, B.~Wolshon, and Y.~Sheng, ``Virtual traffic signals: Safe,
  rapid, efficient and autonomous driving without traffic control,'' \emph{IEEE
  Transactions on Intelligent Transportation Systems}, 2020.

\bibitem{viana2019cooperative}
{\'I}.~B. Viana, H.~Kanchwala, and N.~Aouf, ``Cooperative trajectory planning
  for autonomous driving using nonlinear model predictive control,'' in
  \emph{2019 IEEE International Conference on Connected Vehicles and Expo
  (ICCVE)}.\hskip 1em plus 0.5em minus 0.4em\relax IEEE, 2019, pp. 1--6.

\bibitem{Zhang_2018}
\BIBentryALTinterwordspacing
Y.~Zhang and C.~G. Cassandras, ``The penetration effect of connected automated
  vehicles in urban traffic: An energy impact study,'' \emph{2018 IEEE
  Conference on Control Technology and Applications (CCTA)}, Aug 2018.
  [Online]. Available: \url{http://dx.doi.org/10.1109/CCTA.2018.8511339}
\BIBentrySTDinterwordspacing

\bibitem{li2020}
K.~Li, J.~Wang, and Y.~Zheng, ``Optimal formation of autonomous vehicles in
  mixed traffic flow,'' 2020.

\bibitem{zheng}
Y.~{Zheng}, J.~{Wang}, and K.~{Li}, ``Smoothing traffic flow via control of
  autonomous vehicles,'' \emph{IEEE Internet of Things Journal}, vol.~7, no.~5,
  pp. 3882--3896, 2020.

\bibitem{wu}
C.~{Wu}, A.~M. {Bayen}, and A.~{Mehta}, ``Stabilizing traffic with autonomous
  vehicles,'' in \emph{2018 IEEE International Conference on Robotics and
  Automation (ICRA)}, 2018, pp. 6012--6018.

\bibitem{guler2014using}
S.~I. Guler, M.~Menendez, and L.~Meier, ``Using connected vehicle technology to
  improve the efficiency of intersections,'' \emph{Transportation Research Part
  C: Emerging Technologies}, vol.~46, pp. 121--131, 2014.

\bibitem{niroumand2020joint}
R.~Niroumand, M.~Tajalli, L.~Hajibabai, and A.~Hajbabaie, ``Joint optimization
  of vehicle-group trajectory and signal timing: Introducing the white phase
  for mixed-autonomy traffic stream,'' \emph{Transportation research part C:
  emerging technologies}, vol. 116, p. 102659, 2020.

\bibitem{hajbabaie}
M.~Tajalli and A.~Hajbabaie, ``Traffic signal timing and trajectory
  optimization in a mixed autonomy traffic stream,'' \emph{IEEE Transactions on
  Intelligent Transportation Systems}, pp. 1--14, 2021.

\bibitem{treiber2000congested}
M.~Treiber, A.~Hennecke, and D.~Helbing, ``Congested traffic states in
  empirical observations and microscopic simulations,'' \emph{Physical review
  E}, vol.~62, no.~2, p. 1805, 2000.

\bibitem{NEWELL2002195}
\BIBentryALTinterwordspacing
G.~Newell, ``A simplified car-following theory: a lower order model,''
  \emph{Transportation Research Part B: Methodological}, vol.~36, no.~3, pp.
  195 -- 205, 2002. [Online]. Available:
  \url{http://www.sciencedirect.com/science/article/pii/S0191261500000448}
\BIBentrySTDinterwordspacing

\bibitem{zhu2018modeling}
M.~Zhu, X.~Wang, A.~Tarko \emph{et~al.}, ``Modeling car-following behavior on
  urban expressways in shanghai: A naturalistic driving study,''
  \emph{Transportation research part C: emerging technologies}, vol.~93, pp.
  425--445, 2018.

\bibitem{kanagaraj2013evaluation}
V.~Kanagaraj, G.~Asaithambi, C.~N. Kumar, K.~K. Srinivasan, and R.~Sivanandan,
  ``Evaluation of different vehicle following models under mixed traffic
  conditions,'' \emph{Procedia-Social and Behavioral Sciences}, vol. 104, pp.
  390--401, 2013.

\end{thebibliography}
\appendix
\subsection{Proof of Theorem~\ref{thm:1}}\label{proof}
For ease of exposition, with slight abuse of notation, we remove $j$ from the subscript and we simply denote the $i$-th vehicle in the $j$-th lane as $i$. 

Since $\lambda$ is large, the AV would like to maintain the free velocity when it would enter the intersection. From (\ref{dyna}), the Hamiltonian for vehicle $i$ is
\begin{align}\label{hamil}
H_i(t,p_i(t),v_i(t),u_i(t))=\dfrac{1}{2}u_i^2+\mu_i^pv_i+\mu_i^vu_i
\end{align}
The Euler-Lagrange equation becomes
\begin{align}\label{partp}
\dot{\mu^{p}_i}=-\dfrac{\partial H_i}{\partial p_i}=0 \quad \text{(cf.(\ref{hamil}))}
\end{align}
where $\dot{}$ indicates derivative with respect to time. Further, 
\begin{align}\label{partv}
\dot{\mu_i^v}=-\dfrac{\partial H_i}{\partial v_i}=-\mu_i^p\nonumber\\
\end{align}
The necessary condition for optimality is
\begin{align}\label{partu}
\dfrac{\partial H_i}{\partial u_i}=u_i+\mu_i^v=0
\end{align}
Hence, from (\ref{partp}), $\mu^p_i=a_i$ and from (\ref{partv}) is $\mu^v_i=-a_it-b_i$. Finally, from (\ref{partu}) is 
\begin{align}\label{u}
u_i(t)=a_i(t-t_i^0)+b_i
\end{align}
Note that the vehicle dynamics is computed from the point the vehicle enters the system i.e., $t_i^{0}$. Hence, the all the variables would be retarded by $t_i^{0}$ amount. From the vehicle dynamics in (\ref{dyna}), we obtain
\begin{align}\label{v}
v_i(t)=\dfrac{1}{2}a_i(t-t_i^0)^2+b_i(t-t_i^0)+c_i\nonumber\\
p_i(t)=\dfrac{1}{6}a_i(t-t_i^0)^3+\dfrac{1}{2}b_i(t-t_i^0)^2+c_i(t-t_i^0)+d_i
\end{align}
\begin{color}{black}
Now, the boundary conditions are $p_i(t_i)=0, v_i(0)=v_{free}, p_i(t_i^{m})=L, v_i(t_i^{m})=v_{free}$. Since there are four unknowns, and four boundary conditions the solution can be uniquely determined by solving the system of equations in (\ref{v}). Note that $b_i$is negative and $a_i$ is positive. Thus, the vehicle initially decelerates. The vehicle then gradually reduces the magnitude of deceleration and accelerates after that. 

The above profile is correct if the all the constraints in (\ref{constr}) are satisfied. In the following, we consider the profile where the constraints at some time $t$ are not satisfied. 

Case i: Suppose that $u_i(t)$ becomes less than $u_{min}$ at some time $t=t_1$. Now, the velocity at time $t_1$ is  obtained from (\ref{v}). $u_i(t)$ would be clipped at $u_{min}$. Hence, the vehicle dynamics becomes for $t\geq t_1$.
\begin{align}\label{constacc}
v_i(t)=v_i(t_1)+u_{min}(t-t_1)\nonumber\\
p_i(t)=\dfrac{1}{2}u_{min}(t-t_1)^2+v_i(t_i)(t-t_1)+p_i(t_1)
\end{align}
Now, if $u_i(t)$ is greater than $u_{min}$ at some time in (\ref{v}), we need to find the time the constraint would be inactive. Thus, we need to find the time at which the vehicle dynamics would again become similar to (\ref{v}). Let the time at which the vehicle dynamics would be similar to (\ref{v}) be $t_2$, then
\begin{align}
v_i(t)=\dfrac{1}{2}a_i(t-t_2)^2+b_i(t-t_2)+c_i\nonumber\\
p_i(t)=\dfrac{1}{6}a_i(t-t_2)^3+\dfrac{1}{2}b_i(t-t_2)^2+c_i(t-t_i^0)+d_i\nonumber\\
u_i(t)=a_i(t-t_2)+u_{min}\nonumber\\
c_i=v_i(t_2)\quad d_i=p_i(t_2)
\end{align}
Note that $v_i(t_2)$ and $p_i(t_2)$ is obtained from (\ref{constacc}). Further $v_i(t_i^{m})=v_{free}$ and $p_i(t_i^m)=L$. This would give unique solution for $a_i$ and $t_2$.

Case ii: Suppose that $u_i(t)$ becomes equal to $u_{max}$ at certain time $t_1$. After that $u_i(t)=u_{max}$. Hence, the dynamics becomes for $t\geq t_i$ is
\begin{align}
v_i(t)=v_i(t_1)+u_{max}(t-t_1)\nonumber\\
p_i(t)=\dfrac{1}{2}u_{max}(t-t_1)^2+v_i(t_1)(t-t_1)+p_i(t_1)
\end{align}
Note that since $v_i(t_i^{m})=v_{free}$, and $p_i(t_i^{m})=L$, thus, 
\begin{align}\label{eq:vp}
v_i(t_1)=v_{free}+u_{max}(t_i^{m}-t_1)\nonumber\\
p_i(t_1)=L-\dfrac{1}{2}u_{max}(t-t_1)^2+v_{free}(t-t_1)
\end{align}
Now, before $t_1$ the vehicle dynamics would be similar to (\ref{v}). We need to find optimal $t_1$ such that the boundary conditions are satisfied ,i.e., the conditions in (\ref{eq:vp}) are satisfied. Further, $u_i(t_1)=u_{max}$ this would give unique $a_i, b_i, d_i$.

Case iii: Suppose that $v_i(t)$ becomes $v_{min}$ at some time $t=t_1$. The velocity can not be reduced. If the vehicle dynamics in (\ref{v}) mandates that the velocity must go below $v_{min}$, the vehicle $i$ can only move at velocity $v_{min}$. Before time $t_1$, the dynamics is given by in (\ref{v}). After that $u_i(t)=0$, $v_i(t)=v_{min}, p_i(t)=v_{min}(t-t_1)+p_i(t_1)$ where $p_i(t_1)$ is given by the equation in (\ref{v}). If the vehicle dynamics in (\ref{v}) mandates that the velocity $v_i(t)$ must be greater than $v_{min}$ at some time. The vehicle dynamics then becomes after time $t_2$
\begin{align}\label{ui0}
u_i(t)=a(t-t_2)\nonumber\\
v_i(t)=\dfrac{1}{2}a(t-t_2)^2+v_{min}\nonumber\\
p_i(t)=\dfrac{1}{6}a(t-t_2)^3+v_{min}(t-t_2)
\end{align}
with the boundary condition that $v_i(t_i^{m})=v_{free}$, $p_i(t_i^{m})=L$, $p_i(t_2)=p_i(t_1)+v_{min}(t_2-t_1)$, $a>0$. The above set of equations give unique values of $a$ and $t_2$. 

Case iv: Suppose that $v_{i}(t)$ becomes equal to $v_{min}$ while $u_i(t)=u_{min}$ at time $t_2$, then the vehicle dynamics would be $u_i(t)=0$, $v_i(t)=v_{min}$ after time $t_2$. If the velocity requires to be greater than $v_{min}$, the vehicle dynamics becomes similar to (\ref{ui0}) after time $t_3$.

Case v: The original solution may give rise the scenario where $p_i(t)>s$ for some $s$, and velocity would be negative in order to make $p_i(t_i^m)=s$. In the above scenario, there must exist a time $t_2$ such that $v_i(t_2)=0$ and $p_i(t_2)<s$. Let in the original solution $s_1=p_i(t_2$). 

There also must exist a time $t_1$ at which $p_i(t_1)=s_1$. Hence, the optimal solution is now given in the form given in (\ref{v}) with the boundary condition that $v_i(t_1)=0$ and $p_i(t_1)=s_1$. Between time $t_1$ and $t_2$ the velocity will be $0$ and the position will be $s_1$. After time $t_2$, the position, velocity and acceleration is given by the values as in the original solution. 
\end{color}
\end{document}